\def\eee{\mathrm{e}}
\def\eps{\varepsilon}
\newcommand{\reals}{\mathbb{R}}
\def\ra{\rightarrow}
\def\<{\langle}
\def\>{\rangle}
\def\({\left(} %N
\def\){\right)} %N
\def\That{\widehat{T}}
\newtheorem{theorem}{Theorem}
\newtheorem{lemma}[theorem]{Lemma}
\newtheorem{claim}[theorem]{Claim}
\newcommand{\Q}{\mathcal{Q}}
\newcommand{\Tmix}{{\tau}_{\rm{mix}}}
\def\Prob#1{{\mathrm{Pr}\left({#1}\right)}}
\def\ProbCond#1#2{{\mathrm{Pr}\left({#1} \mid {#2} \right)}}
\def\dtv#1{d_{\mathrm TV}(#1)}
\newcommand{\Dvec}{\mathbf{D}}
\renewcommand{\vec}[1]{\mathbf{#1}}
\def\vt{\mathbf{t}}
\title{Fast Convergence of MCMC Algorithms for Phylogenetic
Reconstruction with Homogeneous Data on Closely Related Species}
\author{Daniel \v{S}tefankovi\v{c}\thanks{Department of Computer Science, University of Rochester,
Rochester, NY 14627.  Email: stefanko@cs.rochester.edu. Research
supported in part by NSF grant CCF-0910415.} \and Eric
Vigoda\thanks{College of Computing, Georgia Institute of
Technology, Atlanta GA 30332.  Email: vigoda@cc.gatech.edu.
Research supported in part by NSF grant CCF-0830298 and
CCF-0910584.} }
\date{November 22, 2010}
\begin{document}

\maketitle

\begin{abstract}
This paper studies a Markov chain for phylogenetic reconstruction
which uses a popular transition between tree topologies
known as subtree pruning-and-regrafting (SPR).  We analyze the
Markov chain in the simpler setting that the generating tree
consists of very short edge lengths, short enough so that each
sample from the generating tree
(or character in phylogenetic terminology) is likely to have only one mutation,
and that there enough samples so that the data looks like the generating
distribution.
We prove in this setting that the Markov chain is rapidly mixing, i.\,e.,
it quickly converges to its stationary distribution, which is the posterior
distribution over tree topologies.
Our proofs use that the leading term of the maximum likelihood function of
a tree $T$ is the maximum parsimony score, which is
the size of the minimum cut in $T$
needed to realize single edge cuts of the generating tree.
Our main contribution is a combinatorial proof that in our simplified setting,
SPR moves are guaranteed to converge quickly to the maximum parsimony tree.
Our results are in contrast to recent works
showing examples with heterogeneous data (namely, the data is
generated from a mixture distribution) where many natural Markov chains
are exponentially slow to converge to the stationary distribution.
\end{abstract}

\section{Introduction}

We study Markov Chain Monte Carlo (MCMC) methods for Bayesian inference
of phylogeny.   We begin by presenting the relevant background material
by defining phylogenetic trees,
evolutionary models (in Section \ref{sec:evolutionary-models}),
and the associated MCMC methods (in Section \ref{sec:SPR}).
We refer the interested reader to
Semple and Steel \cite{SS} for a more comprehensive introduction to the
mathematics of phylogeny.
Finally, we present our results and discuss related work in Section \ref{sec:our-results}.

A phylogenetic tree is an unrooted tree $T$ on $n$ leaves (called
taxa, corresponding to $n$ species) where internal vertices have
degree three. Let $E(T)$ denote the edges of $T$ and $V(T)$ denote
the vertices.
In the phylogenetic reconstruction problem, we observe a collection
of labelings of the leaves of $T$ from a set $\Omega$,
and our goal is to infer the tree $T$
from which they were generated from.  For example, if $\Omega=\{A,C,G,T\}$ then
we are given (aligned) DNA sequences for $n$ species, and we
are trying to determine the tree describing the evolutionary history of the
present-day species.

\subsection{Evolutionary models and Maximum Likelihood}
\label{sec:evolutionary-models}

The labelings on the leaves of $T$ are the projection of
labelings on all vertices of $T$, and these labelings of $V$ are generated in
the following manner.  There is a stochastic process along edges of $T$
(e.\,g., modeling the evolutionary process of DNA substitutions)
which is defined by a continuous-time Markov chain.  Thus, for
each edge $e\in T$ there is a $|\Omega|\times |\Omega|$ rate matrix $Q$
and a time $\vt_e>0$, which is called the branch length of $e$.
In this paper, as is typical in the phylogenetic setting,
we assume there is a single rate matrix $Q$ that is common to all
edges.  The rate
matrix is assumed to be reversible with respect to some
distribution $\pi$ on $\Omega$.   Hence, fix $\pi$ as the stationary
vector for $Q$ (i.\,e., $\pi Q = 0$). (The matrix $Q$ is usually
scaled so that we expect one ``substitution'' (i.\,e., change) per unit of time.)
The rate matrix $Q$ defines a continuous time Markov chain, and together
with $\vt_e$ defines a transition matrix on edge~$e$:
\begin{equation}
\label{def:P}
  P_e=\exp(\vt_eQ) = I + \vt_eQ + \vt_e^2Q^2/2! + \vt_e^3Q^3/3! + \dots
\end{equation}
The matrix $P_e$ is a stochastic matrix of size
$|\Omega|\times|\Omega|$, and thus defines a discrete-time Markov
chain, which is time-reversible with stationary distribution
$\pi$, i.\,e., $\pi P_e = \pi$, and
$\pi_i (P_e)_{ij} = \pi_j (P_e)_{ji}$ (for all $i,j\in\Omega$).

The simplest four-state (i.\,e., $|\Omega|=4$) evolutionary
model has a single parameter for the
off-diagonal entries of the rate matrix $Q$; this model is known as
the Jukes-Cantor model.  The most general reversible four-state
model is the GTR (general time reversible) model. For
$|\Omega|=2$ (often studied for mathematical interest),
the model is binary and the rate matrix has a single
parameter; this model is known as the CFN (Cavender-Farris-Neyman)
model.  See Felsenstein \cite{F:book} or Yang \cite{Yang}
for an introduction to these evolutionary models.

Given $T$, the rate matrix $Q$ and the branch lengths
$\vt=(\vt_e)_{e\in E(T)}$, we then define the
following distribution on labelings of the vertices of $T$. Let
$P_e=\exp(\vt_eQ)$ for $e\in E(T)$. We first orient the edges of
$T$ away from an arbitrarily chosen root $r$ of the tree. (We can
choose the root arbitrarily since each $P_e$ is reversible with
respect to $\pi$.) Then, the probability of a labeling
$\ell:V(T)\ra\Omega$ is
\begin{equation}
\label{eq:defp}
\mu'_{T,Q,\vt}(\ell):=\pi(\ell(r))\prod_{\overrightarrow{uv}\in E(T)}
P_{uv}(\ell(u),\ell(v)).
\end{equation}
The distribution $\mu'_{T,Q,\vt}$ can be generated in an
equivalent algorithmic manner. Choose $\ell(r)$ from $\pi$.  Then
for each edge $e=(u,v)\in E(T)$, given an assignment for exactly
one of the endpoints, say $\ell(u)$, choose $\ell(v)$ from the
distribution defined by the row of $P_e$ corresponding to the
label $\ell(u)$.

Let $\mu_{T,Q,\vt}$ be the marginal distribution of $\mu'_{T,Q,\vt}$ on
the labelings of the leaves of $T$ (thus $\mu_{T,Q,\vt}$ is a distribution
on $\Omega^n$ where $n$ is the number of leaves of $T$). Fix $T^*$ with
parameters $Q^*$ and $\vt^*$ as the generating tree. The goal of
phylogeny reconstruction is to reconstruct $T^*$
(and possibly $Q^*$ and $\vt^*$)
from $\mu_{T^*,Q^*,\vt^*}$ (more precisely, from independent samples from $\mu_{T^*,Q^*,\vt^*}$).

Let $\Q$ denote a set of rate matrices with non-zero entries where
$Q^*\in \Q$. The set $\Q$ is the set of possible rate matrices.
The set can be arbitrary, usually it is determined by the model
considered (e.\,g., for the Jukes-Cantor model $\Q$ would contain
rate matrices whose off-diagonal entries are the same). One often
assumes that the rate matrix $Q^*$ is known.  In this case we
would set $\Q=\{Q^*\}$.  On the other hand, our results also apply
if one sets $\Q$ to be the set of all rate matrices with non-zero
entries.

We consider the likelihood of a tree $T$ as, the maximum over rate matrices $Q\in\Q$ and
over assignments of non-zero
branch lengths $\vt$ to the edges of $T$, of the probability that the
tree $(T,Q,\vt)$ generated $\mu$. More formally, the maximum expected
log-likelihood of tree $T$ for distribution $\mu^*$ is defined by
\begin{equation}\label{e:aa1}
{\cal L}_{T}(\mu^*) = \sup_{Q\in\Q} \sup_{\vt} {\cal L}_{T,Q,\vt}(\mu^*),
\end{equation}
where
\begin{equation}\label{e:aa2}
 {\cal L}_{T,Q,\vt}(\mu^*) = \sum_{y\in\Omega^n} \mu^*(y)\ln(\mu_{T,Q,\vt}(y)).
\end{equation}

For a set of characters $\mathbf{D}=(D_1,\dots,D_N)$
where $D_i\in\Omega^n$, define the log-likelihood of a tree $T$ as
\begin{eqnarray*}
{\cal L}_{T}(\mathbf{D}) & =  &
\sup_{Q\in\Q} \sup_{\vt}\,\, \ln\left({\mu}_{T,Q,\vt}\left(\mathbf{D}\right)\right)
\\ & = &
\sup_{Q\in\Q} \sup_{\vt}\,\, \sum_{i=1}^N \ln\left(\mu_{T,Q,\vt}(D_i)\right).
\end{eqnarray*}

Our goal is to sample from the distribution on the
set of phylogenetic trees with $n$ leaves
where the weight of a tree is ${\cal L}_{T}(\mathbf{D})$.
In Section \ref{sec:posterior} we will look at the
straightforward extension to the setting where we are
given a prior on trees and parameters $Q,\vt$, and our goal is
to sample from the posterior distribution.

\subsection{SPR Markov Chain}

\label{sec:SPR}

We analyze a Markov chain using transitions made by
subtree pruning-and-regrafting (SPR).  SPR transitions are
a natural combinatorial transition, which
is also popular in practice.
In Section
\ref{remarks} we discuss several other well-studied
choices for the transitions.  Here we consider trees weighted
by their maximum likelihood.  In Section \ref{sec:posterior}
we discuss how the Markov chain definition and our main
result extends to sampling the posterior distribution.

An SPR transition from a tree T works by choosing an (internal or terminal)
edge $e=(u,v)$.  If $e$ is an internal edge we consider one of the two subtrees
in $T\setminus e$, either the subtree $S_u$ containing  $u$
or the subtree $S_v$ containing $v$.
Let $S_u$ denote the selected subtree.
If $e$ is a terminal edge, let $S_u$ be the endpoint of $e$ that is a leaf.
Let $T'$ denote the tree formed by removing $S_u$ from $T$, in
particular, we remove $S_u$ and edge $e$ from $T$ and ``smooth away'' the
vertex $v$ (that is, contract one of the two adjacent edges).
We then choose an edge $e^*$ in $T'$ and
we attach $S$ onto $e^*$ by adding a new intermediate
vertex along $e^*$.  See Figure \ref{fig:SPR} for an illustration.
Let $SPR(T,S_u,e^*)$ denote the tree resulting from the above transition.

\begin{figure}[htb]
\begin{center}
\includegraphics[type=pdf,ext=.pdf,read=.pdf,height=2in]{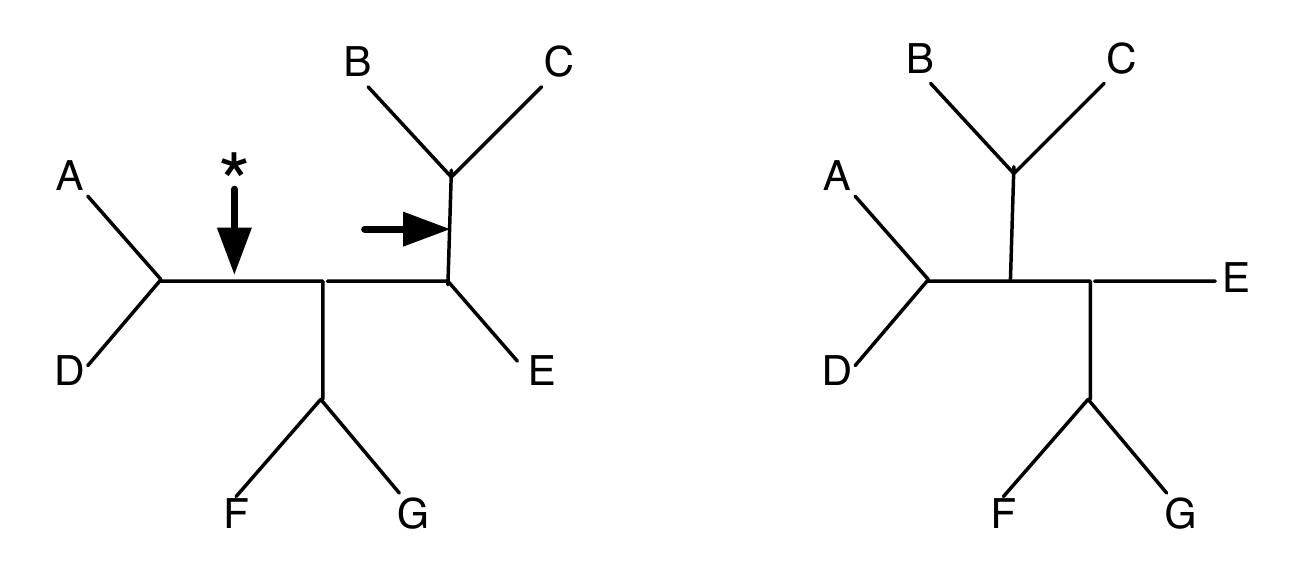}
\caption{Illustration of an SPR transition.
The randomly chosen edge $e$ is marked by an arrow.  The subtree containing
$B$ and $C$ is pruned, and then regrafted at the edge $e^*$
marked by a starred arrow.  The
resulting tree is illustrated.}
\label{fig:SPR}
\end{center}
\end{figure}

We analyze the following Markov chain
which chooses a random subtree $S$
to prune, and then chooses an edge to regraft $S$ along based on the
maximum likelihood of the resulting tree.  This Markov chain
is an analogous to heat bath chains studied in Statistical Physics
(as opposed to Metropolis chains), e.g., see \cite{KLW}, thus we refer
to the below chain as the Heat Bath SPR Markov Chain.
Here is the formal
definition of the transitions $T_t\rightarrow T_{t+1}$ of the
Heat Bath SPR Markov Chain.

From a tree $T_t$ at time $t$ we proceed as follows.
\begin{enumerate}
\item
Choose a random subtree $S$ of $T_t$, by choosing a
random edge $e$ and then choosing one of the two subtrees
hanging off of $e$.
Let $T'$ denote the tree formed by deleting
$S$ and $e$ from $T$.
\item
For each edge $e^*$ of tree $T'$,
let $w(e^*) = \mathcal{L}_{\That}(\vec{D})$ where $\That=SPR(T,S,e^*)$
is the tree formed by pruning $S$ from $T$ and regrafting $S$
onto edge $e^*$.
 Let $\omega$ be the distribution on edges of $T'$ where
$\omega(e^*) = w(e^*)/Z$
and $Z=\sum_{e'\in E(T')} w(e')$.
\item Sample an edge $e^*$ from the distribution $\omega$ on edges of $T'$.
\item Graft $S$ onto edge $e^*$ and move to this new tree, i.\,e.,
set $T_{t+1} = SPR(T,S,e^*)$.
\end{enumerate}

We now verify that the above Markov chain is ergodic and reversible
with respect to the distribution $\pi$ on trees where $\pi(T)\propto \mathcal{L}_T(\vec{D})$,
and thus $\pi$ is also the unique stationary distribution. Let $T_1$ and $T_2$ be
neighboring states of the Markov chain. Let $S$ be the tree that is pruned and
regrafted to obtain $T_2$ from $T_1$. Note that the same tree $S$ can be pruned
and regrafted to obtain $T_1$ from $T_2$. The transition probability from $T_1$
to $T_2$ is the probability of choosing $S$ in step $1$ times
$\mathcal{L}_{T_2}(\vec{D})/Z$. Similarly the transition probability from $T_2$
to $T_1$ is the probability of choosing $S$ in step $1$ times
$\mathcal{L}_{T_1}(\vec{D})/Z$ (note that $Z$ is the same in both cases since pruning
$S$ results in the same tree $T'$). The detailed balance condition
is satisfied for $\pi(T)\propto \mathcal{L}_T(\vec{D})$ and hence it
is the unique stationary distribution.

Let $\dtv{\mu,\nu}$ denote the (total) variation distance
between
a pair of probability distributions $\mu$ and $\nu$ defined on the
same finite, discrete space, and let $P^t(T_0,\cdot)$ denote the
$t$-step distribution of the Markov chain from initial state $T_0$.
The mixing time $\Tmix$ is defined as
\[  \Tmix = \max_{T_0} \min\{t: \dtv{P^t(T_0,\cdot),\pi} \leq 1/2\eee\},\]
which is the time to reach variation distance $\leq 1/2\eee$ of the stationary distribution
from the worst initial state.  Note, it is straightforward to then
``boost'' so that
for any $\delta>0$,
after $\Tmix\ln(1/\delta)$ steps we are within variation distance
$\le\delta$ of $\pi$, from the worst initial state
(see Aldous \cite{Aldous}).

\subsection{Results on MCMC for Phylogenetic Reconstruction}
\label{sec:our-results}

MCMC algorithms are an important tool for phylogenetic reconstruction.
MrBayes
\cite{mrbayes}
 is a popular program that relies on MCMC methods for
 Bayesian inference of phylogeny.  MrBayes
 uses a sophisticated variant of MCMC known
as Metropolis-Coupled MCMC
\cite{MCMCMC}.

For statistical inference problems, such as phylogenetic reconstruction,
it is often easy to design appropriate MCMC algorithms,
such as the above Markov chain we defined using SPR transitions,
which converge in the limit over
time to the desired posterior distribution.
However, the computational efficiency of these methods
rely on their {\em fast} convergence to the posterior distribution.  Since
theoretical results are typically lacking, heuristic methods are used to measure
convergence to the desired distribution.  Hence, there are often no rigorous guarantees
on the scientific computations which rely on the random samples produced by
the MCMC methods.  Our goal is to provide some theoretical understanding of
settings where MCMC methods for phylogenetic reconstruction are
provably fast and hence yield
accurate results, and settings where the MCMC methods are
slow and consequently the samples may be misleading.

There are several works with computational experiments on the
convergence rates of MCMC algorithms for phylogenetic
reconstruction, e.\,g., see the recent works \cite{BKHR,LMHLR}.
There is relatively little theoretical work.
Diaconis and Holmes
\cite{Diaconis}
proved fast convergence of a Markov
chain to the uniform distribution over phylogenetic trees.
Recently, several works have shown examples of heterogeneous data
where MCMC algorithms are provably slow to converge. Mossel and
Vigoda
\cite{MV1,MV2}
proved slow convergence for a class of
examples with data arising from a uniform mixture of a pair of
5-taxa trees (with different topologies).  \v{S}tefankovi\v{c} and
Vigoda
\cite{SV1,SV2}
proved slow convergence for a class of
mixture examples from a pair of 5-taxa trees that share the same
topology but differ in their branch lengths. In these slow mixing
results, the convergence time is exponential in the number of
characters (i.\,e., sequence length).

In this paper we show fast convergence for data from a homogenous
source of closely related species.  In particular, for data
generated from a single tree (of any size) when all the branch
lengths are sufficiently short, we prove fast convergence.  The
requirement of sufficiently short branches is for our proof
technique, but it is important to note that the slow mixing
results mentioned earlier
\cite{MV1,MV2,SV1,SV2}
require, in an
analogous manner, sufficiently short branches.
If one searches for the tree with the maximum likelihood (or maximum
a posteriori probability) our methods show that in our setting
of very short branch lengths,
the space of trees (connected by the SPR moves) has no local maxima and hence
one can find the optimal tree using hill climbing.

For simplicity, we present our results here for the case where the weight of a tree is
the maximum likelihood of generating the given data $D$ where the
maximum is over a rate matrix $Q$ (common to all edges) and a set of
branch lengths $\vt$.  This is closely related to the posterior distribution
when the priors are uniformly distributed.  Our results extend to $\delta$-regular
priors, which are priors that
are  lower bounded by some $\delta>0$, see Section \ref{sec:posterior}
 for a discussion
on the extension of our results to sampling the posterior distribution.
We are interested in the mixing time $\Tmix$,
defined as the number of steps until the chain is within variation distance $\leq 1/2\eee$
of the stationary distribution.

We prove that the Heat Bath SPR Markov Chain
converges quickly to its stationary distribution when the
data is generated from a tree $T^*$ where all of the branch lengths
are sufficiently small, and there are sufficiently many samples generated from $T^*$.
 Here is the formal statement of our main result.

\begin{theorem}
\label{thm:main}
Consider any reversible $4$-state model,
any phylogenetic tree $T^*$ on $n$ taxa and
any rate matrix $Q^*$ with no zero entries.
For all $\alpha_{\min}>0$, there exists $\epsilon_0>0$,
such that
for all $0<\eps<\eps_0$ and
any choice of branch lengths $\vt^*_e\in\(\alpha_{\min}\epsilon,\epsilon\)$ for $e\in E(T^*)$, there exists $N_0>0$
where the following holds.

For a data set with $N>N_0$ characters, each
chosen independently from the distribution $\mu_{T^*,Q^*,\vt^*}$,
then, with probability $>1-\exp(-\sqrt{N})$ over the data generated,
the Heat Bath SPR Markov Chain
has mixing time $\leq 50n$.
\end{theorem}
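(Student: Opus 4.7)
My plan is to reduce fast mixing to a combinatorial claim about the parsimony landscape of SPR moves, using a standard small-$\eps$ expansion of the likelihood. The three ingredients are a likelihood--parsimony approximation, concentration of the parsimony score on the random data, and a ``no local maxima'' claim for parsimony under SPR.

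\textbf{Likelihood--parsimony approximation.} Expanding $P_e = I + \vt_e Q + O(\eps^2)$ in the definition of $\mu_{T,Q,\vt}$ shows that for a single character $D$ the leading order of $\mu_{T,Q,\vt}(D)$ scales as $\eps^{p_T(D)}$, where $p_T(D)$ is the parsimony score of $D$ on $T$. Summing over $N$ characters and optimizing over $Q\in\Q$ and $\vt$ gives
\[
\mathcal{L}_T(\mathbf{D}) = -p(T,\mathbf{D})\ln(1/\eps) + O(N),
\]
with $p(T,\mathbf{D}) = \sum_i p_T(D_i)$ and an error depending only weakly on $T$. Consequently, in Step~3 of the Heat Bath chain any regraft whose parsimony is at least one above the minimum among regrafts is suppressed by a factor $\eps^{\Omega(1)}$.

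\textbf{Concentration and the combinatorial claim.} Let $\kappa_T = \mathbb{E}_{D \sim \mu_{T^*,Q^*,\vt^*}}[p_T(D)]$. Each character carries at most one mutation with probability $1 - O(\eps^2)$, so $\kappa_{T^*} = 1 + O(\eps)$; for any $T \neq T^*$ a quartet disagreement with $T^*$ contributes an extra $\Omega(\eps)$ to $\kappa_T$. Hoeffding on $p(T,\mathbf{D})$ and a union bound over the $(2n-5)!!$ unrooted trees give, with probability at least $1 - \exp(-\sqrt{N})$, that every tree satisfies $p(T,\mathbf{D}) = N\kappa_T(1+o(1))$, so the integer parsimony scores agree with the order of the $\kappa_T$'s and $T^*$ is uniquely minimal. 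The combinatorial core of the argument is: on such a typical data set, for every $T \neq T^*$ there is an SPR move from $T$ to some $T'$ with $p(T',\mathbf{D}) \le p(T,\mathbf{D}) - 1$, and a uniformly random pruned subtree $S$ supports such a move with probability $\Omega(1)$. I would prove this by exhibiting a subtree of $T$ whose induced bipartition disagrees with some edge-bipartition of $T^*$ and regrafting it so that one more $T^*$-bipartition is realized by a minimum cut in the new tree.

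\textbf{Mixing and main obstacle.} Combining (a)--(c), at each Heat Bath step the parsimony score of the current tree has constant negative drift until it reaches $p(T^*,\mathbf{D})$: with probability $\Omega(1)$ Step~1 picks a good subtree $S$, and then Step~3 picks the parsimony-minimizing regraft except with probability $\eps^{\Omega(1)}$ by the suppression in Paragraph~1. Since the initial parsimony gap is $O(n)$, the chain reaches $T^*$ within $O(n)$ steps with overwhelming probability, and a path coupling along the sequence of parsimony-reducing SPR moves, combined with the concentration of $\pi$ on $T^*$, should tighten the constant to yield $\Tmix \le 50n$. The main obstacle is the combinatorial ``no local maxima'' claim: the likelihood expansion and the concentration bound are relatively routine, but showing that every non-optimal $T$ has a strictly parsimony-improving SPR neighbor, reachable by an $\Omega(1)$ fraction of prunings, requires a careful case analysis of how the bipartitions induced by edges of $T^*$ can fail to be realized by minimum cuts in an arbitrary topology $T$.
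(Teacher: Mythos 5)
Your high-level route is the same as the paper's---replace the log-likelihood by a parsimony-type functional for small $\eps$, concentrate the empirical scores, and show that greedy SPR search on that functional has no local optima---but two of your three ingredients have genuine gaps, and they sit exactly where the paper does its technical work. First, the expansion $\mathcal{L}_T(\mathbf{D}) = -p(T,\mathbf{D})\ln(1/\eps)+O(N)$ cannot distinguish trees as stated: on typical data almost all characters are constant, so $p(T,\mathbf{D})=\Theta(nN\eps)$ and the tree-dependent term is only $\Theta(N\eps\ln(1/\eps))$, which is swamped by the $O(N)$ error unless you prove that the $O(1)$-per-character part is tree-independent up to $o(\eps\ln(1/\eps))$. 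That is precisely the content of Lemma \ref{lem:opt-weights}, namely ${\mathcal L}_{T}(\mu^*) = {\cal E}(\pi^*) + A(T)\eps\ln\eps + O(\eps\ln\ln(1/\eps))$ with $A(T)=\sum_{e\in E(T^*)}\alpha^*_e\,\mathrm{cut}_{R(T^*,e)}(T)$, and its upper-bound direction is not routine because of the supremum over $Q$ and $\vt$: the paper separately eliminates rate matrices with stationary distribution $\pi\neq\pi^*$ (a KL-divergence argument on the constant characters), any branch length exceeding $\eps(\ln(1/\eps))^2$ (constant characters again), and, in the all-short regime, takes a union over the at most $2^n$ cuts realizing each split, which is where the $\ln\ln(1/\eps)$ error arises. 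Writing ``an error depending only weakly on $T$'' is naming the thing to be proved, not proving it. (Relatedly, the suppression of a suboptimal regraft is $\exp(-\Omega(N\eps\ln(1/\eps)))$, not $\eps^{\Omega(1)}$, and $\kappa_{T^*}=\Theta(n\eps)$, not $1+O(\eps)$.)

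Second, the combinatorial claim you defer is subtler than your sketch suggests, in two ways. The obvious move---pick an edge $f^*$ of $T^*$ whose split is unrealized in $T$ and regraft to realize it---need not decrease the parsimony functional, because realizing $f^*$'s split can simultaneously increase the cut sizes of other unrealized splits of $T^*$. The paper therefore takes as its potential the \emph{set} of bad edges (edges of $T^*$ whose splits are unrealized in $T$) and proves a dominance lemma (Lemma \ref{lem:T'}, via the cut surgery of Claim \ref{claim:T'}): for any pruned subtree $S$, any regraft of $S$ that turns a realized split into an unrealized one is dominated, cut-by-cut and strictly at $f^*$, by another regraft of the same $S$; hence every $A$-minimizing regraft only shrinks the bad set (Lemma \ref{lem:increase}, part 1)---and crucially this holds for \emph{every} subtree the chain may pick, not only the favorable one, which is what makes the bad-edge count a monotone potential under heat-bath steps. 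Strict progress then comes from one specific subtree (Claim \ref{claim:fewer-bad}): take the bad edge of $T^*$ closest to the leaves, note by minimality that its two child subtrees are exactly reconstructed in $T$, prune one and regraft it next to the other, and verify that no realized split is destroyed. Finally, Lemma \ref{lem:gap} quantizes the penalty: any regraft enlarging the bad set costs at least $\alpha_{\min}$ in $A$, which is exactly what survives the $O(\eps\ln\ln(1/\eps))$ error term. Your further assertion that an $\Omega(1)$ fraction of uniformly random prunings supports an improving move is unsupported---the paper exhibits only one specific subtree, i.e., probability $\Theta(1/n)$ per step, and its accounting rests on that---and the path coupling you invoke is unnecessary: once both copies of the chain are shown to reach and remain at $T^*$ by time $50n$ with the stated probability, a trivial independent coupling finishes the mixing bound.
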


Since there is considerable quantification in the above
theorem we will take a moment to dissect it at a high-level.
First off, the requirement that $N>N_0$ comes from needing the
data to look very much like the generating distribution $\mu^*=\mu_{T^*,Q^*,\vt^*}$.
Therefore, how much data we need depends on several quantities, such as
the minimum probability of a configuration arising in $\mu^*$, which depends
on the minimum branch length and the minimum rate in $Q^*$.  Hence,
$N_0$ depends on $\alpha_{\min}$ and $Q^*$, and is exponential in $n$.
A somewhat related question has been studied by Steel and Sz\'{e}kely
\cite{SteelSzekely1,SteelSzekely2,SteelSzekely3} on how large $N$
needs to be so that the maximum likelihood tree is the generating tree $T^*$.
In their results one also needs $N$ to be exponential in $n$.

Our proof uses the fact that in the setting of Theorem \ref{thm:main}, where the
branch lengths are sufficiently short, the leading term of the maximum likelihood function
is actually maximum parsimony.  Such a result is well-known in the mathematical phylogeny
community, and was first observed by Felsenstein \cite{F:parsimony}.  We require
a more detailed statement of such a result, which we present in
Lemma \ref{lem:opt-weights} in Section \ref{sec:likelihood}.
Our main technical contribution is a combinatorial proof that, in the setting of Theorem \ref{thm:main},
SPR moves can be used in a greedy manner to quickly find the maximum parsimony tree.
This result is presented in Section \ref{sec:A-analysis}.
Finally, in Section \ref{sec:rapid-mixing} we show how Theorem \ref{thm:main}
follows in a straightforward manner from these combinatorial results.
In Section \ref{sec:posterior} we
discuss how Theorem~\ref{thm:main} extends to Bayesian inference.
We make some concluding remarks in Section \ref{remarks}.

\section{Proof of Rapid Mixing}
\label{sec:proof}

\subsection{Overview}

To prove Theorem \ref{thm:main} we will analyze, for every tree $T$,
the maximum expected
log-likelihood ${\mathcal L}_{T}(\mu^*)$ where
$\mu^*=\mu_{T^*,Q^*,\vt^*}$ (recall that ${\mathcal L}_{T}(\mu^*)$ is the maximum
expected log-likelihood of $T$ maximized over all rate matrices $Q$ and all
edge lengths $\vt$, see \eqref{e:aa1}).
To analyze ${\mathcal L}_{T}(\mu^*)$
we will consider the dominant terms of the likelihood function.
We will show that
\[  {\mathcal L}_{T}(\mu^*) =
{\cal E}(\pi^*) + A(T)\eps\ln\eps + o(\eps\ln\eps),
\]
where ${\cal E}(\pi^*)$ is the entropy of the stationary distribution of $Q^*$
and thus is the same for every $T$.  By taking $\eps$
sufficiently small the last term $o(\eps\ln\eps)$ can be ignored.
Therefore, the dominant term is $A(T)\eps\ln\eps$.  We will prove that the
function $A(T)$ decreases with each optimal SPR move.
Hence, since
$\ln\eps$ is negative, we then have that $T^*$ has the highest
maximum expected log-likelihood, and as the Markov chain gets closer
to $T^*$ the maximum expected log-likelihood will increase.
Theorem \ref{thm:main} will then follow in a straightforward manner.

\subsection{Analyzing Likelihood}
\label{sec:likelihood}

Let $T$ be a tree with leaves $\{1,\dots,n\}$. Let $R$ be a
partition of the leaves into two parts $(R_1,R_2)$. Note
that we only consider the partition of the leaves without
any regard for the internal vertices.
Let ${\mathrm cut}_R(T)$ denote the size of the cut (i.\,e., a subset of edges)
of minimum size that disconnects $R_1$ from~$R_2$.
Tuffley and Steel \cite{TuffleySteel} showed that the quantity
${\mathrm cut}_R(T)$ is the parsimony score of the character corresponding
to $R$ (see also Semple and Steel \cite[Proposition 5.1.6]{SS}), where
the character corresponding to $R=(R_1,R_2)$
assigns all leaves in $R_1$ some $\alpha\in\Omega$ and assigns
all leaves in $R_2$ some $\beta\in\Omega,\beta\neq\alpha$.

For an edge $e\in T$, the removal of $e$ splits $T$ into two components.
This induces a partition of the leaves of $T$ into two parts.
We will call this partition $R(T,e)$.

The following is the main technical tool for our results.  The lemma
describes the high-order terms of the likelihood function as
$\eps\rightarrow 0$. Throughout this paper, the asymptotic
notations $o()$ and $O()$ are parameterized by $\eps\rightarrow
0$.

Roughly speaking, the lemma shows there is a function $A(T)$ which
plays a leading role in the maximum likelihood.
In words, $A(T)$ looks at each partition $R$ of leaves in
the generating tree $T^*$ realized by
cutting a single edge $e^*$.  It then considers the minimum number of edges
in $T$ to realize this partition $R$ times the branch length of $e^*$ in the
generating tree.
As mentioned in the introduction, there are earlier results which show
that the leading term of the likelihood function is the parsimony score, e.g.,
Felsenstein~\cite{F:parsimony}, and in the below
lemma, the function $A$ is the leading term of the expected parsimony
score.
 We require a more detailed
statement than we found in the literature.

\begin{lemma}
\label{lem:opt-weights} Let $T^*,T$ be trees with leaves
$\{1,\dots,n\}$ and $Q^*$ be a rate matrix reversible with respect
to $\pi^*$. Assume that the matrix $Q^*$ is normalized (that is,
$\sum_{i\neq j}\pi^*_i Q^*_{ij} = 1$) and that $Q^*$ has no zero
entries. Let $T^*$ have branch lengths $\vt_e^*=\alpha^*_e\eps$,
where $\alpha^*_e\in [\alpha^*_{\min},\alpha^*_{\max}]$, for all
$e\in E(T^*)$, where $\alpha_{\min}>0$. Let
\begin{equation}
\label{def:A}
A=A_{T^*,\vec{\alpha^*}}(T) =\sum_{e\in E(T^*)} \alpha^*_e\,
{\rm cut}_{R(T^*,e)}(T).
\end{equation}
For $\mu^*=\mu_{T^*,Q^*,\vt^*}$ the following holds:
\begin{equation}\label{eee2}
{\mathcal L}_{T}(\mu^*) =  {\cal E}(\pi^*) + A\eps\ln\eps + O(\eps\ln\ln(1/\eps)),
\end{equation}
where ${\cal E}(\pi^*)=\sum_{i\in\Omega} \pi^*_i\ln\pi^*_i$
is the entropy of $\pi^*$,
and the constant in the $O(\cdot)$ is independent of the choice of
the $\alpha^*_e$ (but does depend on
$\alpha^*_{\min}$ and $\alpha^*_{\max}$).
\end{lemma}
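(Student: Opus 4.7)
The plan is to prove matching upper and lower bounds on $\mathcal{L}_T(\mu^*)$ by expanding both $\mu^*$ and $\mu_{T,Q,\vt}$ in powers of $\eps$ and tracking the logarithms carefully.

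First I would expand $\mu^*$ using $P_e = I + \alpha_e^*\eps\, Q^* + O(\eps^2)$ for each $e\in E(T^*)$. The distribution $\mu^*$ splits into three groups of characters: (i) constant characters $y\equiv c$ with mass $\pi^*_c + O(\eps)$, which will produce the $\mathcal{E}(\pi^*)$ term; (ii) for each edge $e^*\in E(T^*)$ and each ordered pair $\alpha\neq\beta$, the character labeling one side of the partition $R(T^*,e^*)$ with $\alpha$ and the other with $\beta$, carrying mass $\alpha_{e^*}^*\eps\,\pi^*_\alpha Q^*_{\alpha\beta} + O(\eps^2)$; and (iii) all remaining ``multi-mutation'' characters, with total mass $O(\eps^2)$. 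The group-(ii) characters will drive the $A\eps\ln\eps$ term.

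For the upper bound I would fix any admissible $Q\in\Q$ and $\vt$, and observe that any character $y$ realizing leaf partition $R$ must, under every extension of $y$ to internal vertices of $T$, disagree along at least $\mathrm{cut}_R(T)$ edges of $T$. Hence $\mu_{T,Q,\vt}(y)\le C_Q\,(\max_e \vt_e)^{\mathrm{cut}_R(T)}$ for some $C_Q$ depending only on $Q$. Plugging into $\sum_y \mu^*(y)\ln\mu_{T,Q,\vt}(y)$ and using the decomposition of $\mu^*$, the group-(ii) contribution is $\sum_{e^*}\alpha_{e^*}^*\eps\cdot \mathrm{cut}_{R(T^*,e^*)}(T)\cdot \ln(\max_e\vt_e) + O(\eps)$, while group-(i) contributes $\mathcal{E}(\pi^*)+O(\eps)$ and group-(iii) contributes $O(\eps^2\ln(1/\eps))$. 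After observing that any $\vt$ with $\max_e\vt_e$ too small forces constant-character probabilities far from $\pi$ and too large makes mutation-character probabilities too large, maximizing over $Q,\vt$ yields $\mathcal{L}_T(\mu^*)\le \mathcal{E}(\pi^*)+A\eps\ln\eps+O(\eps\ln\ln(1/\eps))$.

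For the matching lower bound I would take $Q=Q^*$ and set all branch lengths on $T$ to a common value such as $\vt_e=\eps/\ln(1/\eps)$. Then each group-(ii) character has probability $\Theta\bigl((\eps/\ln(1/\eps))^{\mathrm{cut}_{R(T^*,e^*)}(T)}\bigr)$, so its log is $\mathrm{cut}_{R(T^*,e^*)}(T)\cdot\bigl(\ln\eps-\ln\ln(1/\eps)\bigr)+O(1)$. Summing against the group-(ii) masses from the first step reproduces $A\eps\ln\eps$ together with the claimed $O(\eps\ln\ln(1/\eps))$ error, while shrinking the branch lengths by $\ln(1/\eps)$ keeps events requiring two or more mutations on $T$ at a strictly lower order.

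The hard part is the error bookkeeping for the lower bound. We must exhibit a single branch-length assignment on $T$ (paired with one rate matrix $Q$) that is simultaneously near-optimal for every edge $e^*$ of $T^*$, losing at most $O(\eps\ln\ln(1/\eps))$ overall; this rules out tailoring $\vt$ to each $e^*$ separately and forces the globally balanced choice above, whose leading-order contribution must be shown to match the upper bound. Care is also required to verify that the group-(iii) characters, which carry only $O(\eps^2)$ total mass but whose probabilities under $T$ involve logarithms of order $\ln(1/\eps)$, contribute at most $O(\eps)$ and thus fall inside the stated error term.
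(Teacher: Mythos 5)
Your decomposition of $\mu^*$ and your lower bound are essentially the paper's argument (the paper takes the even simpler uniform choice $\vt_e=\eps$ with $Q=Q^*$, which already gives error $O(\eps)$; your extra shrinkage by $\ln(1/\eps)$ is harmless but unnecessary, and the ``hard part'' you flag for the lower bound is not actually hard --- the cut sizes simply multiply $\ln\eps$, so one uniform branch-length assignment is simultaneously near-optimal for every edge $e^*$ with no tension). The genuine gap is in your upper bound, at the step you dispose of in one sentence: ``maximizing over $Q,\vt$ yields\dots''. Your inequality $\mu_{T,Q,\vt}(y)\le C_Q\,(\max_e\vt_e)^{\mathrm{cut}_R(T)}$ gives a group-(ii) contribution of order $A\eps\ln(\max_e\vt_e)$, which matches the target $A\eps\ln\eps+O(\eps\ln\ln(1/\eps))$ only if $\max_e\vt_e\le\eps\cdot\mathrm{polylog}(1/\eps)$; nothing in your proposal rules out the supremum being approached by much longer branch lengths, for which your bound becomes vacuous. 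Worse, the mechanism you invoke to exclude this is backwards: small $\vt$ keeps the constant-character probabilities \emph{close} to $\pi$ (it is the mutation characters that suffer, which only helps an upper bound), and large $\vt$ makes the mutation characters \emph{more} probable, which is favorable rather than damaging --- the true cost of long branches is borne by the constant characters, which carry $\Theta(1)$ mass under $\mu^*$.

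The paper closes exactly this gap with a quantitative case analysis, and it is where the specific error term $O(\eps\ln\ln(1/\eps))$ in the statement originates. If the stationary distribution $\pi$ of $Q$ differs from $\pi^*$, restricting the (negative-term) sum to constant characters and applying Gibbs' inequality gives $\mathcal{L}_{T,Q,\vt}(\mu^*)\le \mathcal{E}(\pi^*)-D_{KL}(\pi^*\|\pi)$, a constant-order deficit. If some edge $f$ has $\vt_f>\eps(\ln(1/\eps))^2$, then $\mu(\sigma_i)\le \pi_i\left(1-q_{\min}\eps(\ln(1/\eps))^2+O(\eps^2(\ln(1/\eps))^4)\right)$, so the constant characters alone force a loss of $q_{\min}\eps(\ln(1/\eps))^2$, which swamps the largest conceivable group-(ii) gain (at most $A\eps\ln(1/\eps)$, since those log-probabilities are nonpositive). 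Only when every $\vt_f\le\eps(\ln(1/\eps))^2$ does one get $\ln\mu(\sigma^e_{ij})\le \mathrm{cut}_{R(T^*,e)}(T)\left(\ln\eps+O(\ln\ln(1/\eps))\right)$, by a union bound over the at most $2^n$ cuts of $T$ realizing $R(T^*,e)$, and summing against the group-(ii) masses then produces precisely the claimed error. To repair your proposal you must supply this trade-off --- a threshold of the form $\eps\,\omega(\eps)$ with $q_{\min}\omega(\eps)\gg\ln(1/\eps)$ while $\eps\ln\omega(\eps)$ stays within $O(\eps\ln\ln(1/\eps))$ --- in place of the asserted optimization over $\vt$.
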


As a consequence of the above lemma, to analyze the
expected log-likelihood on the tree space, when $\eps$
is sufficiently small, we simply have to consider the function $A=A(T)$.
In the next subsection we will investigate the combinatorial
properties of $A$.

Before presenting the proof of Lemma \ref{lem:opt-weights} we give a brief outline of its proof.
Let $\mu^*$ denote the probability distribution defined by $Q^*$ and $T^*$
on assignments of labels from $\Omega$ to the leaves.
In this generating distribution $\mu^*$,
to prove \eqref{eee2}
we only need to consider two types of assignments.
The first type are
constant assignments where no substitutions occur and
thus all leaves receive the
same label $i\in\Omega$, these are denoted as $\sigma_i$.
The second type are
 assignments obtained by a substitution along just one
edge $e^*$.  In this case, the cut obtained by deleting edge $e^*$ plays
an important role.  By deleting $e^*$ from $T^*$, the leaves
are partitioned into two sets $R_1$ and $R_2$, denoted as $R(T^*,e^*)$.
If a substitution
only occurs along edge $e^*$, then the leaves in $R_1$ will receive
the same label $i\in\Omega$, and the leaves in $R_2$ will receive another label
$j\in\Omega, j\neq i$.  We denote such an assignment by $\sigma^{e^*}_{ij}$.
Any other type of assignment requires at least two substitutions,
and hence has probability at most $O(\eps^2)$, which is dominated
by the $O(\eps\ln\ln(1/\eps))$ term of~\eqref{eee2}.

For any tree $T$, to prove that ${\mathcal L}_{T}(\mu^*)$ is lower
bounded by the right-hand side of \eqref{eee2}, we compute the
expected log-likelihood of $\mu^*$ for the rate matrix $Q=Q^*$ and
the set of branch lengths $\vt$ where $\vt_e=\eps$
for every edge $e$.  For each edge $e^*$
and its corresponding assignment $\sigma^{e^*}_{ij}$, the quantity
${\rm cut}_{R(T^*,e)}(T)$ is the minimum number of edges which
require a substitution to obtain the assignment
$\sigma^{e^*}_{ij}$ on $T$. Hence, the quantity $A=A(T)$ plays an
important role when we sum over all edges $e^*$ of $T^*$.  In
particular, by a calculation (as detailed in \eqref{lower-final} below), the expected log-likelihood ${\mathcal
L}_{T,Q,\vt}(\mu^*)$, for this
set of branch lengths $\vt$,  is  $\sum_{i\in\Omega} \pi_i^*\ln\pi^*_i +
 A\eps\ln\eps + O(\eps)$.  Since $O(\eps)=O(\eps\ln\ln(1/\eps))$,
 this implies the lower bound of~\eqref{eee2}.

 To obtain the upper bound of \eqref{eee2} we consider three cases:
 when the rate matrix $Q$ has a stationary distribution different from
 $Q^*$, when there is an edge $e$ where $\vt_e$ is long
 (namely $\geq \eps(\ln(1/\eps))^2$),
 and when all edges are short.  In the first case of different stationary
 distributions, by considering the constant assignments it will
 be easy to establish that there is a difference in the first
 term of the right-hand side of~\eqref{eee2}.
When there is a long edge, then the constant assignments are too
unlikely to occur.  Finally, if all edges are shorter than
$\eps(\ln(1/\eps))^2$, then, by calculation, we show that
the expected log-likelihood is at most the
right-hand side of~\eqref{eee2}.

We now present the formal proof of Lemma \ref{lem:opt-weights}.

%{\par {\it Proof of Lemma \ref{lem:opt-weights}}. \ignorespaces}
\begin{proof}[Proof of Lemma \ref{lem:opt-weights}]
We first make some observations about the distribution~$\mu^*$.
Let $P^*$ denote the transition matrix for $Q^*$, as defined in \eqref{def:P}.

Note, for any $e$, any $i,j\in\Omega$ where $i\neq j$,
we have
\begin{equation}
\label{P:ij}
(P^*_e)_{ij} = Q^*_{ij}\alpha^*_e\eps + O(\eps^2).
\end{equation}
For any $i\in\Omega$ we have
\[
(P_e^*)_{ii} = 1 - \sum_{j\neq i} (P^*_e)_{ij} = 1 + O(\eps).
\]

For $i\in\Omega$, let $\sigma_i\in\Omega^n$ denote the
constant assignment $\sigma_i(v)=i$ for all leaves~$v$.
Note to achieve $\sigma_i$ in $\mu^*$ we assign label $i$ to the root and
then we have no substitutions, or we have at least two edges with substitutions.
Thus,
\begin{equation}\label{er2}
\mu^*(\sigma_i) = \pi^*_i\prod_{e\in E(T^*)} (P_e^*)_{ii}  + O(\eps^2) = \pi^*_i + O(\eps).
\end{equation}
For an edge $e\in E(T^*)$ and $i,j\in\Omega$ where $i\neq j$,
let $\sigma^e_{ij}\in\Omega^n$ denote the assignment of label $i$ to
all leaves in one of the partitions of $R(T^*,e)$ and label $j$ to all
leaves in the other partition of $R(T^*,e)$.  In this case we have:
\begin{equation}\label{er}
\mu^*(\sigma^e_{ij}) = \pi^*_iQ^*_{ij}\alpha^*_e\eps + O(\eps^2).
\end{equation}
(To see why \eqref{er} is correct, w.l.o.g., assume that the root is a leaf
in the first partition of $R(T^*,e)$, and hence to achieve $\sigma_{ij}$
we need to label the root by $i$ and have a substitution on $e$, or
at least two edges with substitutions.)

Now we compute ${\cal L}_{T,Q,\vt}(\mu^*)$ where $\vt_e=\eps$ for
each edge $e$ of $T$ and $Q=Q^*$. Again we will make some observations
about $\mu=\mu_{T,Q,\vt}$. By the same reasoning as we used for \eqref{er2},
we obtain
\begin{equation}\label{er3}
\mu(\sigma_i) = \pi^*_i + O(\eps).
\end{equation}
We can obtain assignment $\sigma^e_{ij}$ on $T$ using a substitution on
$ {\rm cut}_{R(T^*,e)}(T)$ edges, and we cannot obtain this assignment
with fewer substitutions. Hence,
\begin{equation}\label{er4}
\mu(\sigma^e_{ij}) = \Theta(\eps^{{\rm cut}_{R(T^*,e)}(T))}).
\end{equation}
Therefore,
\begin{equation}\label{er5}
\ln\mu(\sigma^e_{ij}) = \Theta(1) +  {\rm cut}_{R(T^*,e)}(T)\ln\eps.
\end{equation}
In order to compute the high-order terms of ${\cal L}_{T,Q,\vt}(\mu^*)$
we do not need to consider labelings other than $\sigma_i$ and $\sigma^e_{ij}$
(the other labelings have probability $O(\eps^2)$ in $\mu^*$).

Combining \eqref{er}, \eqref{er2}, \eqref{er3}, and \eqref{er5} we obtain
\begin{eqnarray}
%\begin{equation}
%\begin{split}
{\mathcal L}_{T,Q,\vt}(\mu^*)
& = &
O(\eps^2\ln\eps) + \sum_{i\in\Omega} (\pi^*_i+O(\eps)) \ln (\pi^*_i+O(\eps))
\nonumber
\\ &&
+
\sum_{e\in E(T^*)}\sum_{i\neq j} (\pi^*_iQ^*_{ij}\alpha^*_{e}\eps + O(\eps^2))(\Theta(1) +
{\rm cut}_{R(T^*,e)}(T)\ln\eps)
\nonumber
\\
& = &
 O(\eps) +
\sum_{i\in\Omega} \pi_i^*\ln\pi^*_i +
 A\eps\ln\eps,
 \label{lower-final}
\end{eqnarray}
% \end{split}
%\end{equation}
where in the last inequality we used the fact that $Q^*$ is normalized.
This proves
the lower bound in \eqref{eee2}.

It remains to prove the upper bound in \eqref{eee2}.  We will show
that no rate matrix and no assignment of branch lengths can do
better than the bound established in~\eqref{lower-final}. Let $Q$
be a rate matrix with stationary distribution $\pi$. If
$\pi\neq\pi^*$ then we bound ${\mathcal L}_{T,Q,\vt}(\mu^*)$ as
follows. First, note that the terms in the sum \eqref{e:aa2} are
negative and hence to obtain an upper bound we will only consider
the constant assignments. Second, the probability of constant
assignment $\sigma_i$ in $\mu^*$ is $\mu^*(\sigma_i)\leq \pi_i^*$
and similarly $\mu(\sigma_i)\leq\pi$.  Thus
$$
{\mathcal L}_{T,Q,\vt}(\mu^*) \leq \sum_{i\in\Omega} \pi_i^*\ln\pi_i
= \sum_{i\in\Omega} \pi_i^*\ln\pi_i^* - D_{KL}(\pi^*\|\pi),
$$
where $D_{KL}(\pi^*||\pi) := \sum_{i\in\Omega} \pi^*_i(\ln(\pi^*_i/\pi_i))$
is the K-L divergence of $\pi$ from $\pi^*$.  Since,
by the Gibbs' inequality, the KL-divergence is positive when $\pi\neq \pi^*$,
we have established the upper bound in \eqref{eee2} for the case $\pi\neq\pi^*$.

Now we assume $\pi=\pi^*$. Let $\vt$ be an assignment of branch lengths
to $T$. Let $\mu = \mu_{T,Q,\vt}$. Suppose that there exists an edge $f\in E(T)$
with branch length $\vt_f > \eps(\ln(1/\eps))^2$.  We are going
to show that such a $\vt$ has a tiny log-likelihood because of the
constant leaf labelings (i.\,e., $\sigma_i, i\in\Omega$).
By \eqref{def:P}, we have
$(P_f)_{ii} \leq 1 - q_{\min}\eps(\ln(1/\eps))^2 + O(\eps^2(\ln(1/\eps))^4)$,
where $q_{\min} = \min_{i,j\in\Omega} |Q(i,j)|$.  Hence,
$$
\mu(\sigma_i)\leq \pi_i\left(1 - q_{\min}\eps(\ln(1/\eps))^2 + O(\eps^2(\ln(1/\eps))^4)\right).
$$
Thus
\begin{eqnarray}\label{er6}
{\mathcal L}_{T,Q,\vt}(\mu^*)
& \leq &
 O(\eps) + \sum_{i\in\Omega} \pi^*_i \(\ln (\pi_i) -q_{\min}\eps(\ln(1/\eps))^2 +  O(\eps^2(\ln(1/\eps))^4)\)
 \nonumber
 \\
 & \leq &
{\cal E}(\pi^*)-q_{\min}\eps(\ln(1/\eps))^2+O(\eps).
\end{eqnarray}
As $\eps\rightarrow 0$, \eqref{er6} is smaller than the right-hand side of \eqref{eee2}
and we are done.

We are now left with the case in which all edges $f\in E(T)$ have branch lengths $\vt_f\leq\eps(\ln(1/\eps))^2$.
Since we can generate the leaf labelings starting
from any vertex, then by starting at a leaf, we see that:
\begin{equation}\label{er78}
\ln\mu(\sigma_{i}) \leq \ln \pi_i.
\end{equation}
Moreover, for $e\in E(T^*)$,  to generate $\sigma^e_{ij}$, we need to
have substitutions across
all edges in a cut that realizes $R(T^*,e)$.  Since the edges are short this
happens with probability $\le \left(\eps(\ln(1/\eps))^2\right)^k$ where
$k$ is the size of the cut. Since there are at most $2^n$ such cuts and
each has size at least ${\rm cut}_{R(T^*,e)}(T)$, we have that:
\begin{equation}\label{er7}
\ln\mu(\sigma^e_{ij}) =  {\rm cut}_{R(T^*,e)}(T)(O(\ln\ln(1/\eps))+\ln\eps).
\end{equation}
Hence,
\begin{eqnarray*}
{\mathcal L}_{T,Q,\vt}(\mu^*)
& \leq & O(\eps^2\ln\eps) + {\cal E}(\pi)
\\
& & +   \sum_{e\in E(T^*)}\sum_{i\neq j} (\pi^*_iQ^*_{ij}\alpha^*_{e}\eps + O(\eps^2))
 {\rm cut}_{R(T^*,e)}(T)(O(\ln\ln(1/\eps))+\ln\eps)
\\ &= &
 O(\eps\ln\ln(1/\eps)) +
{\cal E}(\pi) +
 A\eps\ln\eps.
 \end{eqnarray*}
%{\vbox{\hrule height0.6pt\hbox{%
%   \vrule height1.3ex width0.6pt\hskip0.8ex
%   \vrule width0.6pt}\hrule height0.6pt
%  }
\end{proof}

%\newpage

\subsection{Analyzing the Cut-Distance $A(T)$}
\label{sec:A-analysis}

\newcommand{\GOOD}{\mathrm{GOOD}}
\newcommand{\BAD}{\mathrm{BAD}}
\newcommand{\Tmin}{T_{\min}}

In light of Lemma \ref{lem:opt-weights}, we need to analyze how
$A(T)$ changes with SPR moves.  By taking $N$ sufficiently large,
for each subtree $S$,
we will only need to analyze the effect of
the optimal SPR move for $S$ (optimal in terms of minimizing $A(T')$).

The quantity $A(T)$ looks at cuts obtained by single edges of $T^*$.
For a tree $T$, we classify the edges of $T^*$ as
good or bad if their corresponding cut in $T^*$ is
realizable in $T$ by cutting a single edge.  More precisely, let
\[
\GOOD_{T^*}(T) = \{e^*\in E(T^*): \mbox{there exists } e\in E(T)
\mbox{ where } R(T,e) = R(T^*,e^*)\},
\]
be the set of good edges for $T$.  Let $\BAD_{T^*}(T) = E(T^*)\setminus \GOOD_{T^*}(T)$.

The following lemma says that for every tree $\tilde{T}$ obtained
from $T$ by an SPR move using $S$, if $\tilde{T}$ has more bad
edges than $T$, then this was not the optimal SPR move using $S$.
Namely, there is a tree $T'$ which is also obtained from $T$ by an
SPR move using $S$, and $T'$ is such that $A(T')<A(\tilde{T})$.
(More precisely, each term in $A(T')$ is less than or equal to the
corresponding term in $A(\tilde{T})$ and there is a term in
$A(T')$ which is strictly smaller than the corresponding term in
$A(\tilde{T})$). Our proof has some similarity to those of Bruen
and Bryant \cite{BruenBryant} which connect the parsimony score of
a character to the minimum number of SPR transitions needed to
obtain the character.

\begin{lemma}
\label{lem:T'}
For every generating tree $T^*$ and all trees $T,\tilde{T}$ where $T$ and $\tilde{T}$ differ
by a prune-and-regraft of a subtree $S$ and such that
there exists
$f^*\in \BAD_{T^*}(\tilde{T})\setminus \BAD_{T^*}(T)$ the
following holds. There exists a tree $T'$ which differs from $T$ by a
prune-and-regraft of $S$ and  such that
${\rm cut}_R(T')\leq {\rm cut}_R(\tilde{T})$ for every partition $R$ realized by single edges in
$T^*$ and ${\rm cut}_R(T') < {\rm cut}_R(\tilde{T})$ for partition $R$ realized by $f^*$ in $T^*$.
\end{lemma}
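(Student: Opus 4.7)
The plan is to construct $T'$ explicitly. Set $R^* = R(T^*,f^*) = (R^*_1,R^*_2)$ and let $T_0$ be the tree obtained by pruning $S$ from $T$, with $L_0, L_S$ denoting the leaves of $T_0, S$ respectively. Since $f^* \in \GOOD_{T^*}(T)$, some edge of $T$ realizes $R^*$. This edge cannot lie in $S$ and cannot be the pruned edge, for in either case the same partition would also be realized by a single edge of $\tilde T$, contradicting $f^* \in \BAD_{T^*}(\tilde T)$. Hence it corresponds to an edge $e'_0 \in E(T_0)$ realizing the partition $(R^*_1 \cap L_0, R^*_2)$ of $L_0$; WLOG $L_S \subseteq R^*_1$. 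The same reasoning shows that the attachment edge $\tilde e$ of $\tilde T$ must lie strictly on the $R^*_2$-side of $e'_0$ in $T_0$. I will take $T' := SPR(T,S,e'_0)$, i.e., regraft $S$ onto $e'_0$.

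The strict inequality for $R^*$ is then immediate: grafting at $e'_0$ splits it into two edges, and the one on the $R^*_2$-side realizes $R^*$ as a single-edge cut in $T'$, so ${\rm cut}_{R^*}(T') = 1 < 2 \leq {\rm cut}_{R^*}(\tilde T)$.

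For any other partition $R = R(T^*,g^*)$ I will show ${\rm cut}_R(T') \leq {\rm cut}_R(\tilde T)$ by a parsimony-labeling exchange. Take an optimal $2$-state labeling $f$ of $\tilde T$ extending $R$ on the leaves; its cost (the number of bichromatic edges) equals ${\rm cut}_R(\tilde T)$. Define $f'$ on $T'$ to coincide with $f$ on every shared vertex (all leaves, all internal vertices of $T_0$, and all internal vertices of $S$, including the root $r_S$), and set $f'(w_{e'_0})$ by majority rule on its three neighbors $y_1,y_2,r_S$, where $e'_0 = (y_1,y_2)$. Because optimality of $f$ forces its value at the attachment vertex $w_{\tilde e}$ to be the majority of its own three neighbors $x_1,x_2,r_S$ (where $\tilde e = (x_1,x_2)$), a direct cost accounting with $s := f(r_S)$ yields
\[
\mathrm{cost}(f',T') - \mathrm{cost}(f,\tilde T) = \mathbf{1}[f(y_1)=f(y_2)\neq s] - \mathbf{1}[f(x_1)=f(x_2)\neq s].
\]

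The main obstacle is showing this difference is $\leq 0$; here I will exploit the fact that $R$ and $R^*$, both realized by single edges of $T^*$, must be compatible splits, so one of the four intersections $R_i \cap R^*_j$ is empty. Splitting into the four cases, when $R_1 \subseteq R^*_2$ or $R_2 \subseteq R^*_2$ the restriction $R|_{L_S}$ is trivial and $L_S$ lies in the same $R$-part as every $L_0$-leaf on the $R^*_1$-side of $e'_0$; an inductive argument in optimality then shows the label of every vertex on that side, including $y_1$, equals $s$, killing the first indicator. When $R_1 \subseteq R^*_1$ or $R_2 \subseteq R^*_1$, the $R^*_2$-side of $e'_0$ is monochromatic for $R$, so every vertex strictly on that side, including $y_2$ and both endpoints $x_1,x_2$ of $\tilde e$, receives the same label in $f$. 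Then $f(y_2) = f(x_1) = f(x_2)$, and whenever the first indicator is $1$ (i.e., $f(y_2) \neq s$) the second indicator is also $1$. Combining, $\mathrm{cost}(f',T') \leq \mathrm{cost}(f,\tilde T)$, so ${\rm cut}_R(T') \leq \mathrm{cost}(f',T') \leq {\rm cut}_R(\tilde T)$, as desired.
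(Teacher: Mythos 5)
Your construction of $T'$ coincides exactly with the paper's: the authors likewise first show that the edge $f$ of $T$ realizing $R^*$ cannot lie in $S$, deduce that $\tilde{T}$ must regraft $S$ strictly on the far side of that edge (their ``$\tilde{T}$ is obtained by regrafting $S$ onto an edge in $S_2$''), take $T'$ to be the regraft of $S$ onto $f$ itself, and get ${\rm cut}_{R^*}(T')=1<2\leq{\rm cut}_{R^*}(\tilde{T})$ by the same one-line observation. Where you genuinely diverge is the proof of ${\rm cut}_R(T')\leq{\rm cut}_R(\tilde{T})$ for the remaining splits (the content of Claim \ref{claim:T'}). The paper works directly with cuts: it fixes a minimum cut $C$ of $\tilde{T}$ realizing $R$, chosen to minimize edges inside $S_1\setminus S$ and $S$, proves a reachability property for $v_1$ and $v$, classifies those vertices by the side of $R$ they can reach, and in three cases performs edge surgery on $C$ (e.g., $C'=(C\setminus Q)\cup\{f\}$) to produce an equally small cut of $T'$. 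You instead transfer an optimal two-state parsimony labeling from $\tilde{T}$ to $T'$, fill in the one new vertex by majority rule, and compress the entire comparison into the local identity
\[
\mathrm{cost}(f',T')-\mathrm{cost}(f,\tilde{T})=\mathbf{1}[f(y_1)=f(y_2)\neq s]-\mathbf{1}[f(x_1)=f(x_2)\neq s],
\]
with the case analysis driven by compatibility of the splits $R$ and $R^*$ --- the same structural fact the paper invokes via \eqref{case1} and \eqref{case2}. Your route buys very transparent bookkeeping (one scalar identity, using the parsimony-equals-min-cut fact the paper already cites from Tuffley and Steel), at the price of needing structural claims about optimal labelings; the paper's cut surgery avoids labelings entirely but its reachability and type analysis is fiddlier.

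One caveat, which I would call an overstatement rather than a gap. In your second case ($R_1\subseteq R^*_1$ or $R_2\subseteq R^*_1$) the assertion that \emph{every} optimal labeling gives all vertices strictly on the $R^*_2$-side of $e'_0$ the same label is false in general, because in $\tilde{T}$ the subtree $S$ hangs inside that side and its leaves may carry the opposite color: for instance, with cherries $a_1,a_2$ (off $y_1$) and $s_1,s_2$ (the leaves of $S$) in $R_1$, and pendant leaves $d_1$ (off $y_2=x_1$) and $d_0=x_2$ in $R_2$, the labeling assigning color $1$ to all of $y_1,y_2,w_{\tilde e},r_S$ is optimal (cost $2$), yet $y_2$ does not get the $R^*_2$-side color. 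Your conclusion nevertheless survives: a recoloring argument shows any such exceptional optimal labeling must have $f(r_S)=f(y_2)$, so the first indicator cannot fire there; alternatively, since you only need \emph{some} minimum-cost labeling, you may recolor to choose an optimal $f$ that is monochromatic on that side (the recoloring never increases cost). Either patch is a few lines. Relatedly, your first case's ``inductive argument in optimality'' is correct, but only because internal vertices of phylogenetic trees have degree three --- the forcing $\overline{m}(Y)\geq m(Y)+2$ fails at degree-two vertices --- and that dependence deserves to be stated explicitly.
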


\begin{proof}
Suppose an edge $f^*\in E(T^*)$
is good for $T$ and is bad for $\tilde{T}$.
Let $L_1,L_2$ be the partition of the leaves induced by $f^*$ in $T^*$.
Thus, in $T$, there is an edge $f=(v_1,v_2)$ which
partitions the leaves into $L_1$ and $L_2$.
See Figure~\ref{fig:setup} for an illustration of the setup.

\begin{figure}[htb]
\begin{center}
\includegraphics[type=pdf,ext=.pdf,read=.pdf,height=1.2in]{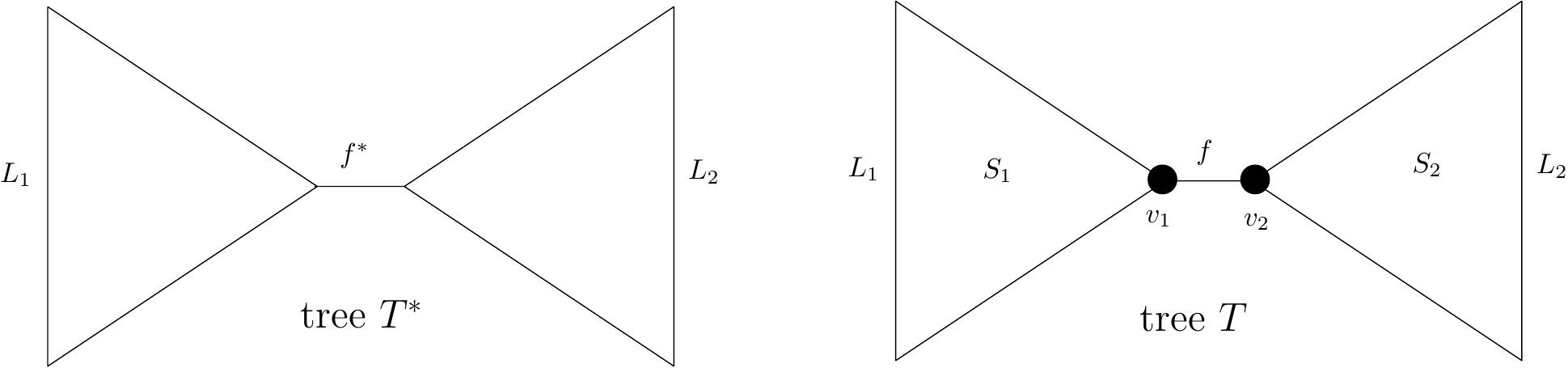}
\caption{Edge $f^*$ is good for $T$.} \label{fig:setup}
\end{center}
\end{figure}

Let $S_1$ denote the subtree ``hanging off'' of $v_1$ in $T$.
More precisely, after deleting $f$ from $T$, let $S_1$ be the subtree containing $v_1$.
Let $L_1$ denote the leaves in $S_1$.
Similarly, let
 $L_2$ denote the leaves and $S_2$ denote the subtree hanging off of $v_2$.
 Let $v$ denote the root of the subtree $S$.

First we claim that $f\not\in S$.  Suppose $f\in S$ and without loss of generality suppose
$S_1\subset S$.  See Figure \ref{fig:f-in-S} for an illustration of this case.
\begin{figure}[htb]
\begin{center}
\includegraphics[type=pdf,ext=.pdf,read=.pdf,height=1.5in]{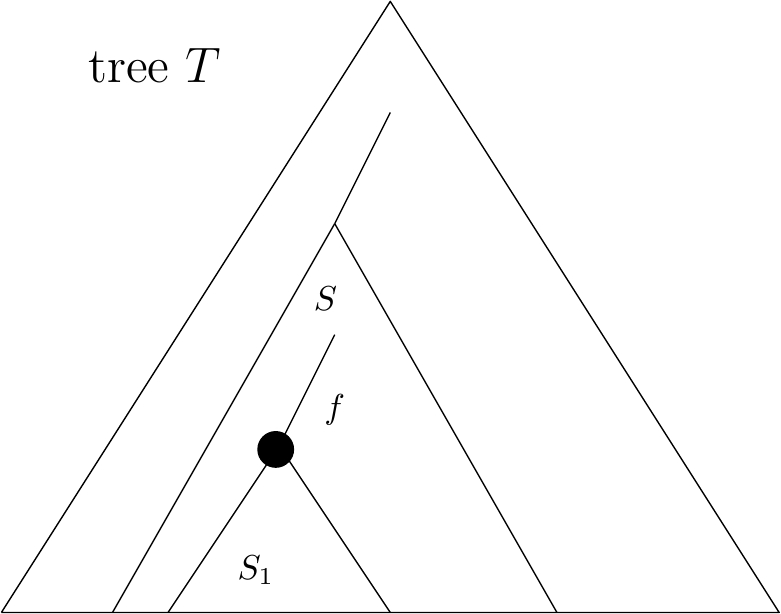}
\caption{Case when $f\in S$, this scenario cannot occur.}
\label{fig:f-in-S}
\end{center}
\end{figure}
Thus we must be grafting $S$ into an edge of $S_2\setminus S$.  After such a move,
the edge $f$ still separates $L_1$ and $L_2$, and thus $f^*$ is still good.
Therefore, $f\not\in S$.

From now on, we assume, without loss of generality, that $S\subset
S_1$ where $S\neq S_1$. see Figure \ref{fig:TT2}.
\begin{figure}[htb]
\begin{center}
\includegraphics[type=pdf,ext=.pdf,read=.pdf,height=1.3in]{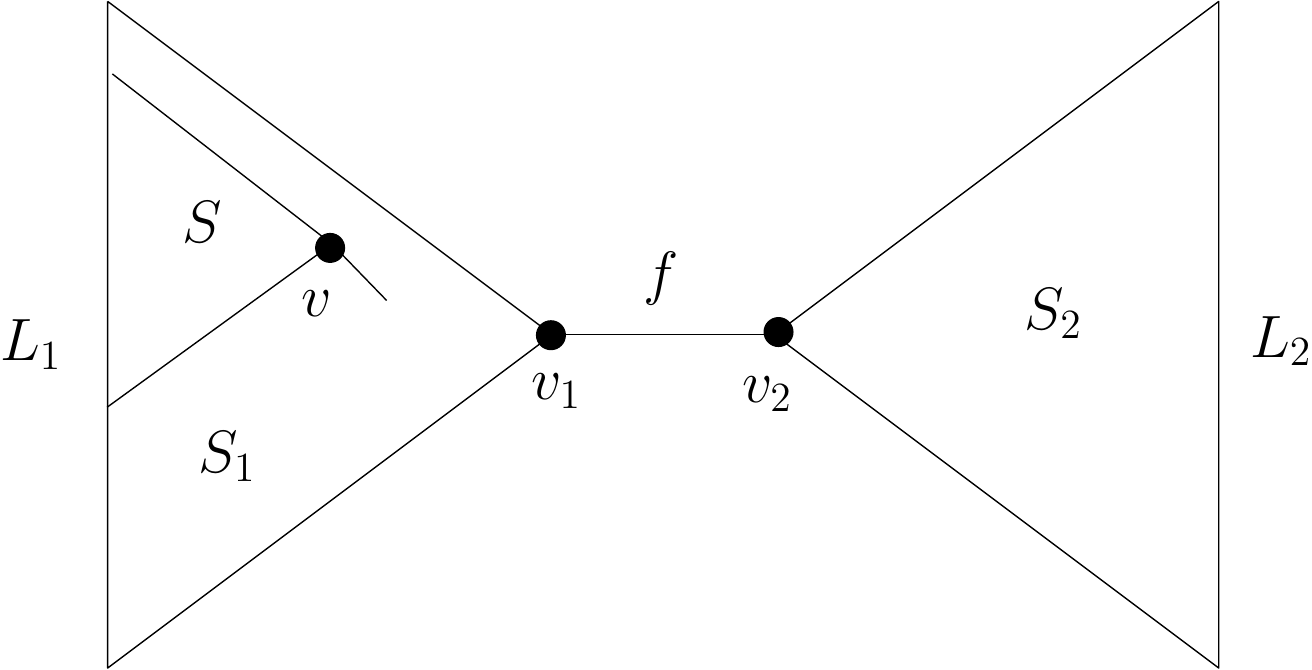}
\caption{Case when $f\not\in S$, this must be the scenario.} \label{fig:TT2}
\end{center}
\end{figure}
We construct the tree $T'$ by taking $T$, pruning $S$ and then regrafting $S$ along edge~$f$,
see Figure \ref{fig:T'}.

\begin{figure}[htb]
\begin{center}
\includegraphics[type=pdf,ext=.pdf,read=.pdf,height=1.5in]{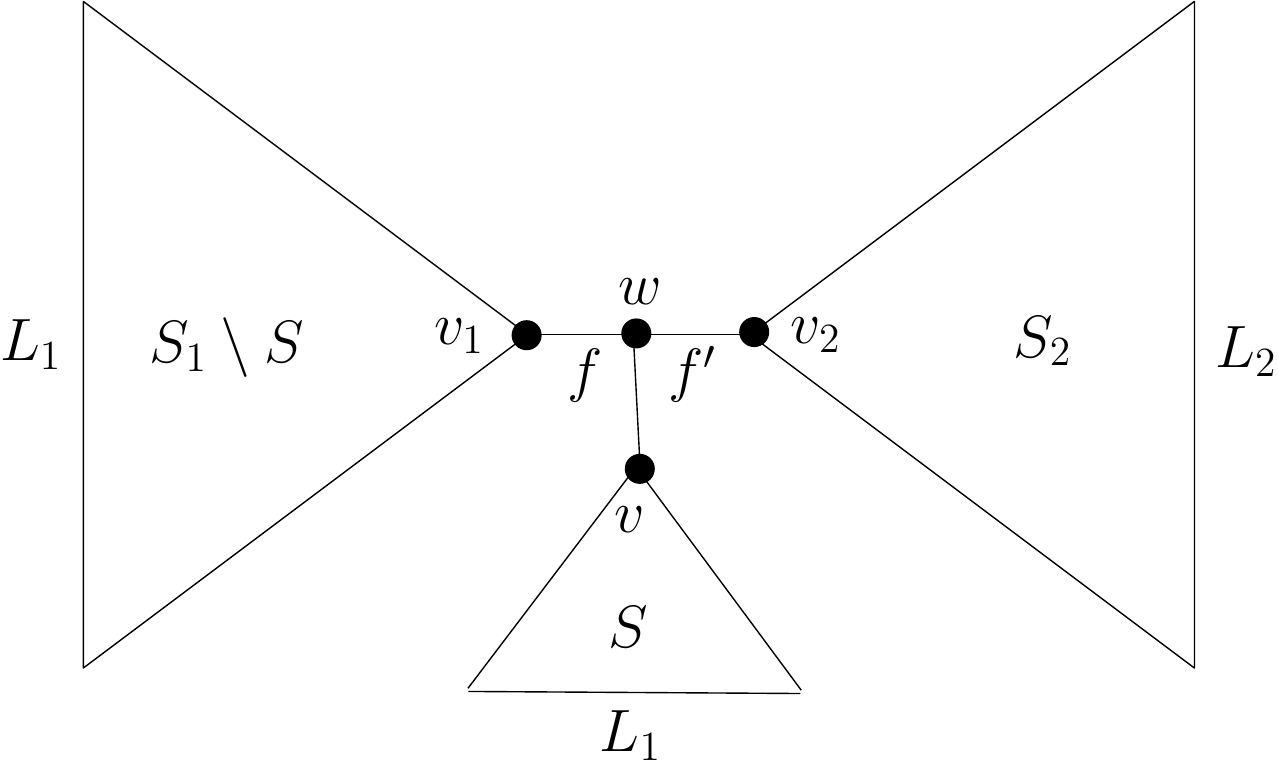}
\caption{Construction of the tree $T'$.}
\label{fig:T'}
\end{center}
\end{figure}
Note that $\tilde{T}$ is obtained from $T$ by regrafting $S$ onto an
edge in $S_2$ (otherwise $f^*$ would be good for $\tilde{T}$), see
Figure \ref{fig:Tww}.
\begin{figure}[htb]
\begin{center}
\includegraphics[type=pdf,ext=.pdf,read=.pdf,height=1.7in]{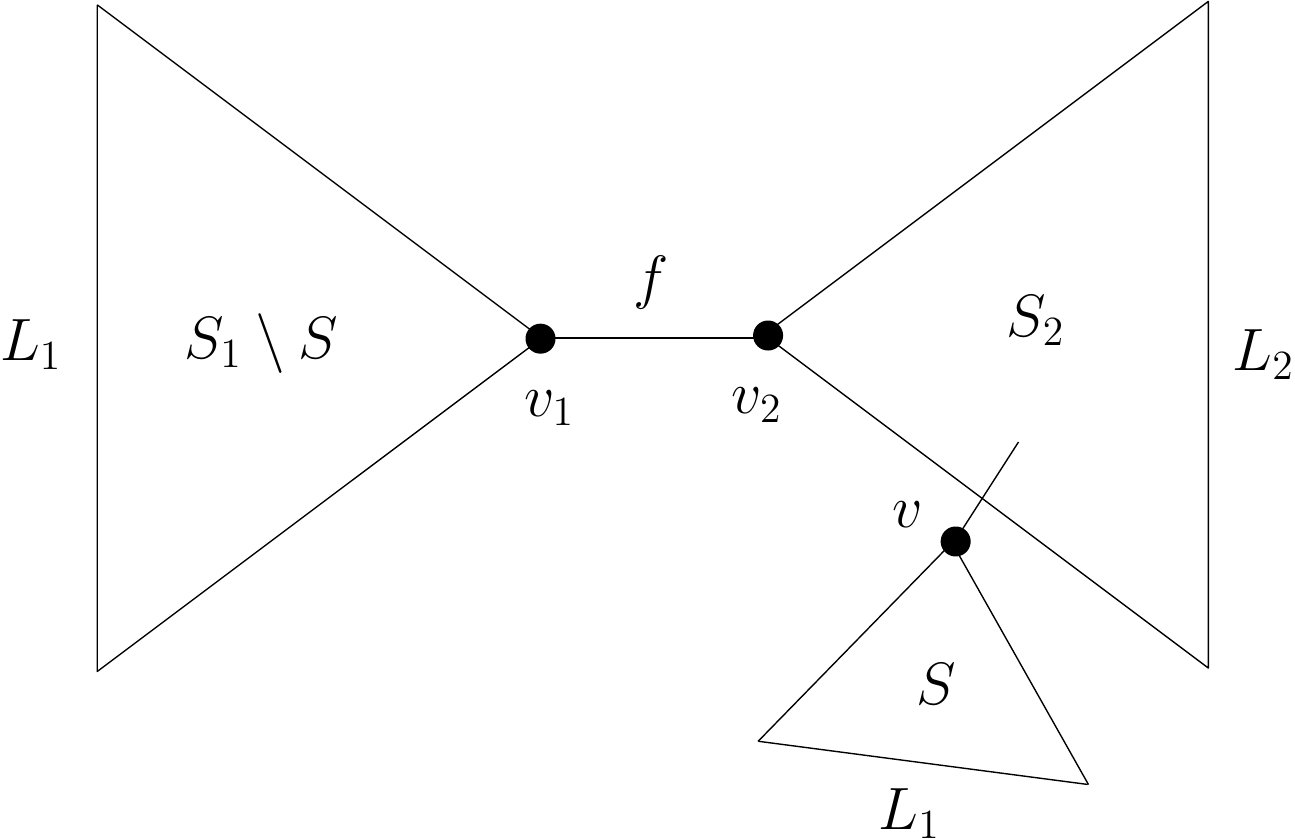}
\caption{In tree $\tilde{T}$, $S$ is regrafted into $S_2$.}
\label{fig:Tww}
\end{center}
\end{figure}

The following claim says that the tree $T'$ satisfies the conclusion of the lemma.

\begin{claim}
\label{claim:T'}
For every partition $R=(R_1,R_2)$ of leaves realized by edges of $T^*$,
it holds that:
\[ {\rm cut}_R(\tilde{T})\geq {\rm cut}_R(T'),
\]

Moreover, for the partition $R^*=(L_1,L_2)$ (corresponding
to $f^*$),
we have that:
\[{\rm cut}_{R^*}(\tilde{T}) > {\rm cut}_{R^*}(T').
\]
\end{claim}

The proof of the claim proceeds by constructing a cut in $T'$
realizing $(R_1,R_2)$ by a small modification of a cut in
$\tilde{T}$ realizing $(R_1,R_2)$.
Assuming the claim,
the proof of the lemma is now complete.
\end{proof}

We now prove the above claim.

\begin{proof}[Proof of Claim \ref{claim:T'}]
%{\par {\it Proof of Claim \ref{claim:T'}. \ignorespaces}

We continue using the setup and notation from the proof of Lemma
\ref{lem:T'} in
Section~\ref{sec:A-analysis}.

Recall, the claim says that
for every partition $R=(R_1,R_2)$ of leaves realized by edges of $T^*$
that ${\rm cut}_R(\tilde{T})\geq {\rm cut}_R(T')$ and for the partition $R^*=(L_1,L_2)$,
${\rm cut}_{R^*}(\tilde{T}) > {\rm cut}_{R^*}(T')$.

First we argue that ${\rm cut}_{R^*}(\tilde{T}) > {\rm cut}_{R^*}(T')$.  Note that
${\rm cut}_{R^*}(\tilde{T}) \geq 2$ since $f^*$ is bad for $\tilde{T}$.  On the other hand,
${\rm cut}_{R^*}(T')=1$ since cutting $f''$ separates $L_1$ and~$L_2$.
Now we just need to argue that ${\rm cut}_R(\tilde{T})\geq {\rm cut}_R(T')$.

Let $g^*\in E(T^*)$ be an edge in $T^*$.
Let $R=(R_1,R_2)$ be the corresponding partition in $T^*$.
Note that if $g^*$ is in the subtree with leaves $L_1$ then
\begin{equation}
\label{case1}
R_1\subseteq L_1 \mbox{ and } R_2\supseteq L_2.
\end{equation}
On the other hand, if $g^*$ is in the subtree with leaves $L_2$ then
\begin{equation}
\label{case2}
R_2\subseteq L_2 \mbox{ and } R_1\supseteq L_1.
\end{equation}

Consider a minimum cut $C\subset E(\tilde{T})$ that realizes $(R_1,R_2)$ in $\tilde{T}$
and amongst these minimum cuts is the one with the fewest number of edges in
subtrees $S_1\setminus S$ and $S$.

We claim that $v_1$ is reachable from a leaf of
$S_1\setminus S$ in $\tilde{T}\setminus C$ and that $v$ is reachable from a leaf of
$S$ in $\tilde{T}\setminus C$. Suppose that $v_1$ is not reachable from a leaf of $S_1\setminus S$.
Let $e'$ be the edge in $C\cap (S_1\setminus S)$ closest to $v_1$. We claim that
$C'=(C\setminus\{e'\})\cup\{f\}$ realizes $(R_1,R_2)$. If there was a pair of leaves
in $S_1\setminus S$ each in different $R_i$ that are connected in $\tilde{T}\setminus C'$
then by the choice of $e'$ one of those leaves would be connected to $v_1$ in $\tilde{T}\setminus C$, a contradiction with the assumption that $v_1$ is not
reachable from a leaf of $S_1\setminus S$ in $\tilde{T}\setminus C$.
Thus $R_1$ and $R_2$ are still separated in $S_1\setminus S$ in $\tilde{T}\setminus C'$;
$R_1$ and $R_2$ are still separated by $C'$ in $S_2\cup S$ since $C=C'$ in this subtree;
and $f\in C'$ ensures that pairs across $f$ are separated.
Note that $|C'|\leq |C|$ and $C'$ has fewer edges in $S_1\setminus S$ and $S$, a contradiction
with the choice of $C$. Thus $v_1$ is reachable from some leaf of $S_1\setminus S$ in
$\tilde{T}\setminus C$. The argument for $S$ and $v$ is the same.

Since a leaf of $S$ is reachable from $v$ in $\tilde{T}\setminus C$, then in other words a (non-empty) subset of
$R_1$ and/or $R_2$ are reachable from $v$.  Moreover, since $C$ realizes the partition $(R_1,R_2)$,
then a subset of only one of the $R_i$ is reachable from $v$ in $\tilde{T}\setminus C$;
we will say $v$ is of type $R_i$ to signify the $R_i$ reachable from $v$.
Analogously, we say $v_i$ is of type $R_i$ for the set reachable from $v_1$.

If $v$ and $v_1$ are of the same type $R_i$, let
\[ C' = \begin{cases}
  C &
  \mbox{if }
  f\notin C \\
  (C\setminus\{f\})\cup\{f'\} & \mbox{if } f\in C
  \end{cases}
  \]
  We claim $C'$ realizes $(R_1,R_2)$ in $T'$.
  To see this, note that if a path (between a pair of leaves) exists in $T'$ and
  does not exist in $\tilde{T}$ then it must include $w$ which is the new vertex in $T'$
  where $S$ is regrafted, see Figure \ref{fig:T'} for an illustration.
  Now we argue that such a path can not connect a leaf in $R_1$
  with a leaf in $R_2$ in $T'\setminus C'$.
  Note that only a subset of $R_i$ is reachable from $w$ in $T'\setminus C'$,
   since $w$ can reach
  the same set of vertices (outside of $S$) in $T'\setminus C'$ as $v_1$ does in $\tilde{T}\setminus C$,
  and only a subset of $R_i$
  is reachable from $v$ in $S\setminus C$.
Finally, since $|C'|=|C|$ we have that ${\rm cut}_R(T')\leq {\rm cut}_R(\tilde{T})$ which completes the proof in
  this case.

  Now suppose $v_1$ is of type $R_1$ and $v$ is of type $R_2$.  This
  means a leaf of $S$ is in $R_2$ and since $S\subset S_1$, it is also in $L_1$ and
  we are in case \eqref{case1}, thus $R_2\supseteq L_2$.
  Note $C$ has to separate $v_1$ from $L_2$ by some set of edges $Q\subseteq C$.
  Let $C'=(C\setminus Q)\cup\{f\}$.
  The new pairs of leaves that are connected
  in $T'\setminus C'$ (but not in $\tilde{T}\setminus C$)
  are either both from $L_2$
  and hence $R_2$, or are connected by a path that exists in $T'$ and
  does not exist in $\tilde{T}$.
  As in the previous case,
  if a path (between a pair of leaves) exists in $T'$ and
  does not exist in $\tilde{T}$ then it must include $w$ which is the new vertex in $T'$
  where $S$ is regrafted.
  Note that $w$ is disconnected from $S_1\setminus S$ in $T'\setminus C'$ (since $f\in C'$).
  The leaves of $S_2$ are from $R_2$  and the leaves of $S$ reachable from $v$ in $S\setminus C$
  are also from $R_2$.   Therefore the new paths do not connect leaves of $R_1$ and $R_2$.
  This completes this case since $|C'|\leq |C|$.

  Finally, suppose $v_1$ is of type $R_2$ and $v$ is of type $R_1$.
  In this case a leaf of $S_1\setminus S$ is in $R_2$ and is also in $L_1$ and
 therefore we are again in case \eqref{case1}, thus $R_2\supseteq L_2$.
   Note $C$ has to separate $v$ from $L_2$ by some set of edges $Q\subseteq C$.
  Let $C'=(C\setminus Q)\cup \{w,v\}$.
Once again, the new pairs of leaves that are connected
  in $T'\setminus C'$ (but not in $\tilde{T}\setminus C$)
  are either both from $L_2$
  and hence $R_2$, or are connected by a path that exists in $T'$ and
  does not exist in $\tilde{T}$.
  Note that $w$ is disconnected from the leaves of $S$ in $T'\setminus C'$.
  The leaves of $S_2$ are from $R_2$  and the leaves of $S_1\setminus S$
  reachable from $v_1$ are also from $R_2$.
  Therefore the new paths in $T'\setminus C'$ do not connect leaves of $R_1$ and $R_2$.
  This completes this case since $|C'|\leq |C|$.

This completes the proof of the claim.
%{\vbox{\hrule height0.6pt\hbox{%
%   \vrule height1.3ex width0.6pt\hskip0.8ex
%   \vrule width0.6pt}\hrule height0.6pt
%  }
\end{proof}

Using Lemma \ref{lem:T'},
we will prove that for every subtree $S$ the optimal SPR move using
$S$ does not increase the number of bad edges, and there is a
subtree $S$ where the optimal SPR move using $S$ decreases the
number of bad edges.  It will then be
straightforward to prove rapid mixing by analyzing the time until
the number of bad edges is zero, and hence we have reached $T^*$.

\begin{lemma} \label{lem:increase}
For all trees $T^*$, every choice of parameters $\alpha:E(T^*)\rightarrow\reals^+$,
for all trees $T\neq T^*$ the following holds, where $A=A_{T^*,\vec{\alpha}}$ is
defined in \eqref{def:A}.
\begin{enumerate}
\item
For any subtree $S$ of $T$ the following holds.
Let $\Tmin$ be any tree which minimizes $A(\Tmin)$
amongst the SPR neighbors of $T$ which differ by a prune-and-regraft of $S$.
Then,
\begin{equation}
\label{eq:part1}
\BAD_{T^*}(\Tmin) \subseteq \BAD_{T^*}(T).
\end{equation}
\item
There exists a subtree $S$ of $T$ where the following holds.
Let $\Tmin$ be any tree which minimizes $A(\Tmin)$
amongst the SPR neighbors of $T$ which differ by a prune-and-regraft of $S$.
Then,
\begin{equation}
\label{eq:part2}
\BAD_{T^*}(\Tmin) \subsetneq \BAD_{T^*}(T).
\end{equation}
\end{enumerate}
\end{lemma}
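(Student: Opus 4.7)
The plan is as follows. Part 1 follows directly from Lemma \ref{lem:T'} by contradiction: if the $S$-optimal $\Tmin$ had some $f^* \in \BAD_{T^*}(\Tmin) \setminus \BAD_{T^*}(T)$, then Lemma \ref{lem:T'} applied with $\tilde T = \Tmin$ produces an SPR-$S$-neighbor $T'$ of $T$ whose cut term for $R(T^*, f^*)$ is strictly smaller than that of $\Tmin$ and whose other $R(T^*, e^*)$-terms are no larger; since every $\alpha^*_e > 0$, this yields $A(T') < A(\Tmin)$, contradicting the optimality of $\Tmin$.

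For Part 2 I first establish the following key reduction: if for some subtree $S$ there exists an SPR-$S$-neighbor $T'$ of $T$ with $\BAD_{T^*}(T') \subsetneq \BAD_{T^*}(T)$, then the $S$-optimal $\Tmin$ also satisfies $\BAD_{T^*}(\Tmin) \subsetneq \BAD_{T^*}(T)$. The argument repeats the Part 1 reasoning applied to the pair $(T', \Tmin)$: both are SPR-$S$-neighbors of $T$ and hence of each other (their common prune yields $T \setminus S$), so if the equality $\BAD_{T^*}(\Tmin) = \BAD_{T^*}(T)$ held, any $f^* \in \BAD_{T^*}(T) \setminus \BAD_{T^*}(T')$ would lie in $\BAD_{T^*}(\Tmin) \setminus \BAD_{T^*}(T')$, and Lemma \ref{lem:T'} applied with base tree $T'$ and $\tilde T = \Tmin$ would produce an SPR-$S$-neighbor $T''$ of $T$ with $A(T'') < A(\Tmin)$, contradicting optimality.

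It then suffices to exhibit a single SPR move $T \to T'$ strictly shrinking $\BAD_{T^*}$. I pick $f^* \in \BAD_{T^*}(T)$ minimizing $|L_1|$ over the partition $(L_1, L_2) = R(T^*, f^*)$ with $|L_1| \leq |L_2|$, and induct on $|L_1|$. In the base case $|L_1| = 2$, $L_1 = \{a, b\}$ is a $T^*$-cherry but, since $f^* \in \BAD_{T^*}(T)$, not a $T$-cherry. I take $S = \{b\}$ and regraft $b$ onto $a$'s terminal edge; in the resulting $T'$ the pair $\{a, b\}$ becomes a cherry whose adjacent internal edge realizes $R(T^*, f^*)$, so $f^* \in \GOOD_{T^*}(T')$. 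Enumerating how the partitions of the affected edges (the three edges at $b$'s old parent and the terminal edge at $a$) change shows that the only split potentially lost is one separating $a$ from $b$; since $\{a, b\}$ is a $T^*$-cherry and hence lies together in every $T^*$-split, this lost split matches no $T^*$-edge, so no good edge becomes bad and $\BAD_{T^*}(T') = \BAD_{T^*}(T) \setminus \{f^*\}$.

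In the inductive case $|L_1| \geq 3$, I pick a cherry $\{a, b\}$ of the restricted tree $T^*[L_1]$. By minimality of $|L_1|$ the $T^*$-edge with partition $(\{a, b\}, L \setminus \{a, b\})$ (of smaller side $2 < |L_1|$) is good in $T$, so $\{a, b\}$ is also a $T$-cherry, say with common parent $v$. I contract $\{a, b\}$ to a single leaf $ab$ in both $T$ and $T^*$ to obtain $\hat T$ and $\hat T^*$: internal edges not incident to the cherry are in partition-preserving bijection (with $\{a, b\}$ replaced by $\{ab\}$), the bad-edge sets correspond, and the contracted $\hat f^*$ has minimum smaller-side size $|L_1|-1$ in $\BAD_{\hat T^*}(\hat T)$. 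The induction hypothesis yields an SPR move $\hat T \to \hat T'$ with $\BAD_{\hat T^*}(\hat T') \subsetneq \BAD_{\hat T^*}(\hat T)$; I lift this move to $T$ by replacing $ab$ (wherever it appears in the pruned subtree or regraft edge) by the intact cherry $\{v, a, b\}$, producing $T'$ with $T'/\{a, b\} = \hat T'$, and the bad-edge bijection then gives $\BAD_{T^*}(T') \subsetneq \BAD_{T^*}(T)$. The main obstacle I anticipate is formalizing this lifting step: one must verify that the pruned subtree and regraft edge in $\hat T$ have unambiguous counterparts in $T$ (so that the lifted SPR move is well-defined), that the $\{a, b\}$-cherry survives the move intact, and that the bijection between bad edges of $T^*$ in $T$ and of $\hat T^*$ in $\hat T$ transports correctly to the new trees.
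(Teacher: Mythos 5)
Your Part 1 and your reduction of Part 2 to exhibiting a single bad-edge-shrinking SPR move match the paper's argument exactly: the paper also derives Part 1 by applying Lemma \ref{lem:T'} with $\tilde T = T_{\min}$, and also uses $N_S(T)=N_S(T')$ to transfer the optimality of $T_{\min}$ to the improved neighbor $T'$ (the paper phrases this as invoking Part 1 at $T'$, while you re-run the Lemma \ref{lem:T'} contradiction directly; combined with Part 1's inclusion $\mathrm{BAD}_{T^*}(T_{\min})\subseteq\mathrm{BAD}_{T^*}(T)$, ruling out equality is indeed enough, so the two are equivalent). Where you genuinely diverge is in the existence claim (the paper's Claim \ref{claim:fewer-bad}). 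The paper takes the bad edge $f^*$ of $T^*$ closest to the leaves, observes that both child subtrees $S^*_1,S^*_2$ hanging below $f^*$ occur intact in $T$ (this is where the paper hides its own brief induction, ``by induction, $S_1=S^*_1$ and $S_2=S^*_2$''), and in one SPR move prunes $S_2$ and regrafts it beside $S_1$; the other good edges stay good because every split of $T^*$ coming from an edge outside $S^*$ keeps $L_1\cup L_2$ on one side. Your cherry-contraction induction proves the same statement and, once unrolled, produces essentially the same move: the blown-up version of the leaf $b$ in your terminal base case is exactly an intact child subtree of the minimal bad edge being regrafted beside its sibling. What the contraction buys you is that the ``no good edge turns bad'' verification only ever has to be done for a single-leaf prune; the price is the lifting bookkeeping you flag, which is indeed routine precisely because $\{a,b\}$ is a cherry of both $T$ and $T^*$, so the splits of the contracted pair biject with the splits keeping $a$ and $b$ together, and the cherry survives any lifted move intact. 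Two small corrections to your base case: the affected edges are not just ``the three edges at $b$'s old parent and the terminal edge at $a$'' but all edges on the path in $T$ between the old attachment point and $a$'s pendant edge ($b$ switches sides on every one of them); your conclusion survives unchanged, since every such changed split separates $a$ from $b$ while no non-pendant split of $T^*$ does. Also, the claimed equality $\mathrm{BAD}_{T^*}(T')=\mathrm{BAD}_{T^*}(T)\setminus\{f^*\}$ should be an inclusion $\subseteq$, since other bad edges may incidentally become good; the inclusion is all you need.
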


Part 1 of Lemma \ref{lem:increase} follows immediately from
Lemma \ref{lem:T'}.  To prove part 2 we choose a particular `minimal'
subtree $S$.  Roughly speaking, we consider the bad edge $f^*$ that is
closest to the leaves in $T^*$, and take the subtree $S$ hanging off of $f^*$.

\begin{proof}[Proof of Lemma \ref{lem:increase}]
%{\par {\it Proof of Lemma \ref{lem:increase}. \ignorespaces}

If \eqref{eq:part1} is violated
then there exists $f^*\in \BAD_{T^*}(\Tmin) \setminus\BAD_{T^*}(T)$,
and hence by Lemma \ref{lem:T'}, there exists $T'$ (which differs
from $T$ and $\Tmin$ by a prune-and-regraft of $S$)  such that
no cuts increased in size and the cut corresponding to $f^*$ is smaller.
Therefore, $A(T')<A(\Tmin)$, contradicting the choice of $\Tmin$.
Therefore, part 1 holds.

We now prove part 2.  We first claim that there is an SPR move that decreases the
number of bad edges.

\begin{claim}
\label{claim:fewer-bad}
For every tree $T$, there is an SPR move resulting in a tree $T'$ where
\begin{equation}
\label{eq:fewer-bad}
\BAD_{T^*}(T') \subsetneq \BAD_{T^*}(T).
\end{equation}
\end{claim}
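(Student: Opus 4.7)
The plan is to locate a minimally bad edge of $T^*$ and construct an explicit SPR move on $T$ that converts it from bad to good while preserving every other good edge. Since a phylogenetic tree is determined by its collection of splits, $T \neq T^*$ forces $\BAD_{T^*}(T) \neq \emptyset$. I would choose $f^* \in \BAD_{T^*}(T)$ minimizing the number of leaves on the smaller side of $R(T^*, f^*)$; call this side $T^*_a$ with leaf set $L_1$ (complement $L_2$), and let $a$ be the endpoint of $f^*$ inside $T^*_a$. By minimality, every internal edge of $T^*_a$ is good: an internal bad edge would have its own smaller side strictly inside $T^*_a$, contradicting the choice of $f^*$. Letting $a_1, a_2$ be the two neighbors of $a$ in $T^*_a$, with the corresponding sub-subtrees carrying leaf sets $L_{a_1}, L_{a_2}$, each $L_{a_i}$ is the leaf set of a subtree $S_i$ of $T$: trivially when $|L_{a_i}|=1$, and through the good edge $(a,a_i)$ otherwise.

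Next, I would define $T'$ to be the SPR obtained by pruning $S_1$ from $T$ and regrafting it onto the edge of $T \setminus S_1$ that borders $S_2$. This inserts a new vertex $z$ adjacent to the roots of $S_1$ and $S_2$ together with one vertex on the ``remainder'' side of $T$; cutting the edge from $z$ into the remainder gives the partition $(L_{a_1} \cup L_{a_2}, L_2) = R(T^*, f^*)$, so $f^* \in \GOOD_{T^*}(T')$.

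The remaining step, and the main obstacle, is to show $\GOOD_{T^*}(T) \subseteq \GOOD_{T^*}(T')$. The only splits of $T$ that the SPR destroys are those realized by edges on the path between the boundary of $S_1$ and the boundary of $S_2$, together with the edge through the smoothed vertex where $S_1$ was detached; every such split separates $L_{a_1}$ from $L_{a_2}$ while placing $L_2$ leaves on both sides. A brief enumeration of edges of $T^*$ shows that no $T^*$-split has this shape: edges outside $T^*_a$ keep all of $L_1$ together; edges strictly inside $T^*_{a_1}$ or $T^*_{a_2}$ keep all of $L_2$ together; and the splits $(L_{a_1}, L\setminus L_{a_1})$ and $(L_{a_2}, L\setminus L_{a_2})$ coming from $(a,a_1)$ and $(a,a_2)$ would require the matching $T$-edge to place all of $L_2$ on one side, which is impossible because every interior vertex along the path in $T$ carries at least one $L_2$-leaved arm. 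Thus every good edge of $T^*$ is realized in $T$ by an edge that persists in $T'$ (either inside $S_1$ or $S_2$, off the path, or as a boundary edge $e_1$ or $e_2$, which becomes one of the new edges incident to $z$), and we conclude $\BAD_{T^*}(T') = \BAD_{T^*}(T)\setminus\{f^*\} \subsetneq \BAD_{T^*}(T)$. The difficulty is precisely in this last bookkeeping: tracking exactly which $T$-splits are perturbed by the SPR and arguing structurally that the perturbed splits cannot coincide with any split of $T^*$ given the forced decomposition of $T^*_a$.
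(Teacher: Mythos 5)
Your proof is correct and takes essentially the same route as the paper's proof of Claim~\ref{claim:fewer-bad}: you pick a bad edge $f^*$ of $T^*$ minimizing the smaller side of its split (the paper's ``closest to the leaves'' choice), use that minimality to find sibling subtrees $S_1,S_2$ of $T$ realizing the two child leaf sets $L_{a_1},L_{a_2}$, and perform the SPR move that regrafts one next to the other, making $f^*$ good. Your explicit accounting of which splits the SPR destroys (only path-edge splits, which separate $L_{a_1}$ from $L_{a_2}$ with $L_2$ leaves on both sides, a shape no split of $T^*$ has) is simply a more detailed rendering of the paper's observation that every good edge $e^*\notin S^*$ keeps $L_1$ and $L_2$ on the same side of its partition, so its realizing edge in $T$ survives the move unchanged.
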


Now we argue that part 2 of the lemma follows from the above claim and
part~1.  We then go back to prove the claim.

Consider a subtree $S$ of $T$.  Let $N_S(T)$ denote those trees obtainable from $T$
by a prune-and-regraft of $S$.  Note, for any $T'\in N_S(T)$, we have that
$N_S(T')=N_S(T)$, since when we prune $S$ from $T$ and $T'$ we have the
same subtree remaining.

Let $T'$ denote the neighboring tree from Claim \ref{claim:fewer-bad}
with fewer bad edges, and let $S$ denote the subtree where $T'\in N_S(T)$.
Let $T_{\min}$ denote the tree in $N_S(T)$ which minimizes $A(T_{\min})$.
As noted above we must have that $N_S(T')=N_S(T)$.
Thus, $T_{\min}$ is also the neighbor of $T'$ that minimizes $A(T_{\min})$.
Therefore, we can apply part 1 of the lemma for tree $T'$ and subtree $S$, and conclude that
$\BAD_{T^*}(T_{\min}) \subseteq \BAD_{T^*}(T')$.
Combined with \eqref{eq:fewer-bad} we then have that:
\[
\BAD_{T^*}(T_{\min}) \subsetneq
\BAD_{T^*}(T),
\]
which proves part 2.
\end{proof}
%{\vbox{\hrule height0.6pt\hbox{%
 %  \vrule height1.3ex width0.6pt\hskip0.8ex
 %  \vrule width0.6pt}\hrule height0.6pt
 % }

We now prove Claim \ref{claim:fewer-bad}.

\begin{proof}[Proof of Claim \ref{claim:fewer-bad}]
%{\par {\it Proof of Claim \ref{claim:fewer-bad}. \ignorespaces}

Let $f^*$ in $T^*$ be an edge in $\BAD_{T^*}(T)$ that is ``closest'' to the leaves
in the following precise sense.  Say $f^*$ joins subtrees $S^*$ and $Z^*$
in $T^*$ where the number of vertices in $S^*$ is at most the number of
vertices in $Z^*$.  Then we say the distance of $f^*$ to the leaves is
the number of vertices of $S^*$.

Note, by the choice of $f^*$,
$S^*$  contains no bad edges for $T$.  First, note that $S^*$ must contain at least
two leaves because, in any tree, any single leaf can be separated
from the rest of the leaves by deleting one edge (which would contradict that $f^*$ is bad).
Let $S^*_1$ and $S^*_2$ denote the two subtrees of $S^*$ hanging from the root of $S^*$ in $T^*$.
Both $S^*_1$ and $S^*_2$ must exist since $S^*$ contains at least two leaves.

Let $L_1$ and $L_2$ denote the leaves in $S^*_1$ and $S^*_2$, respectively.
Since $f^*$ is the closest bad edge to the leaves,
there is a subtree $S_1$ in $T$ whose leaves are $L_1$, and also
a subtree $S_2$ whose leaves are $L_2$.
Moreover, by induction, $S_1=S^*_1$ and $S_2=S^*_2$.
In $T$, by pruning $S_2$ and then regrafting along the edge incident to $S_1$
we obtain a copy of $S^*$ in $T$.
See Figure
\ref{fig:claim} for an illustration.
Let $T'$ be the tree resulting from this SPR move.
Note, $f^*$ is now a good edge in $T'$.

\begin{figure}[htb]
\begin{center}
\includegraphics[type=pdf,ext=.pdf,read=.pdf,height=1.5in]{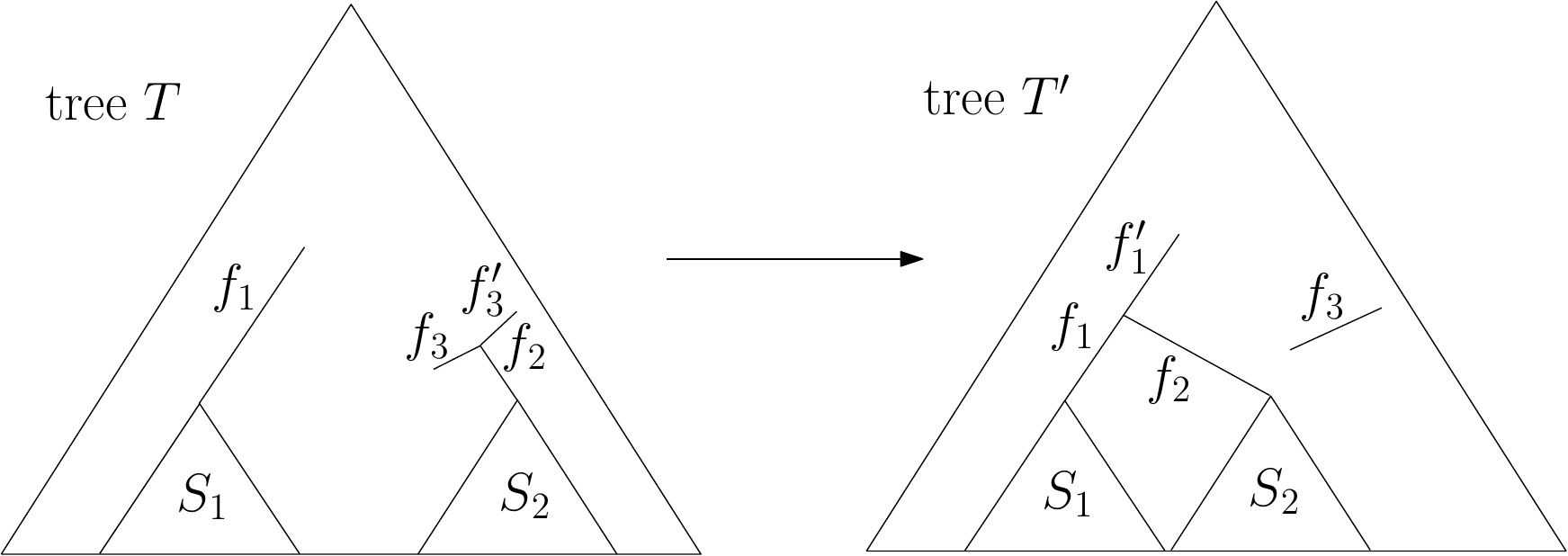}
\caption{Construction of the tree $T'$ with fewer bad edges.}
\label{fig:claim}
\end{center}
\end{figure}

It remains to argue that other edges of $T^*$ did not change from good for $T$ to bad for $T'$.
Note, edges in $S^*_1$ and $S^*_2$ remain good in $T'$ since they are realizable
in $S_1$ and $S_2$, respectively.    Consider an edge $e^*$ of $T^*$ where $e^*\notin S^*$.   Let $(R_1,R_2)$ be the partition
of the leaves realized by $e^*$ in $T^*$. Note that $L_1,L_2$ are
in the same partition, since tree $S^*$ is not cut by $e^*$.
Let $g$ be the edge in $T$ that realizes $(R_1,R_2)$.
After pruning-and-regrafting $S_2$ (to form $T'$), $g$ still realizes the partition
$(R_1,R_2)$ since $L_1$ and $L_2$ are in the same partition.
 Hence, $e^*$ is still good for $T'$.
Therefore, $\GOOD_{T^*}(T') \supseteq \GOOD_{T^*}(T)\cup\{f^*\}$, which
completes the proof of the claim.
\end{proof}
%{\vbox{\hrule height0.6pt\hbox{%
 %  \vrule height1.3ex width0.6pt\hskip0.8ex
 %  \vrule width0.6pt}\hrule height0.6pt
 % }

Finally, we prove that when the number of
bad edges increases then $A(T)$ also increases by
a significant amount.  As a consequence, in our analysis
of the Markov chain, by taking
$\eps$ sufficiently small, we can focus on how a
transition changes $A(T)$ and hence on the change in
the number of bad edges.

\begin{lemma}
\label{lem:gap}
For any trees $T$ and $T'$ which differ by one SPR move, if
$|\BAD_{T^*}(T')| > |\BAD_{T^*}(T)|$
then
\[
A(T') \geq A(\Tmin) + \alpha_{\min}.
\]
\end{lemma}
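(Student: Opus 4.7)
The plan is straightforward: apply Lemma~\ref{lem:T'} to produce a ``reference'' SPR neighbor that strictly beats $T'$ on the newly created bad edge, and then exploit the integrality of cut sizes together with the lower bound $\alpha_{\min}$ on the coefficients.

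Let $S$ denote the subtree pruned-and-regrafted to take $T$ to $T'$, and let $\Tmin$ denote the SPR neighbor of $T$ obtained by pruning-and-regrafting this same $S$ which minimizes $A(\cdot)$ (this matches the convention of Lemma~\ref{lem:increase}). Since $|\BAD_{T^*}(T')| > |\BAD_{T^*}(T)|$, fix some witness $f^* \in \BAD_{T^*}(T') \setminus \BAD_{T^*}(T)$. First I would invoke Lemma~\ref{lem:T'} with $\tilde{T} := T'$ and this $f^*$ to obtain a tree $\hat{T}$ that differs from $T$ by a prune-and-regraft of the same subtree $S$ and satisfies ${\rm cut}_R(\hat{T}) \le {\rm cut}_R(T')$ for every partition $R$ realized by a single edge of $T^*$, together with the strict inequality ${\rm cut}_{R(T^*,f^*)}(\hat{T}) < {\rm cut}_{R(T^*,f^*)}(T')$.

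Next, because cut sizes take integer values, the strict inequality upgrades to ${\rm cut}_{R(T^*,f^*)}(T') - {\rm cut}_{R(T^*,f^*)}(\hat{T}) \ge 1$. Plugging into the definition $A(T) = \sum_{e \in E(T^*)} \alpha_e\, {\rm cut}_{R(T^*,e)}(T)$, each term with $e \ne f^*$ weakly decreases when passing from $T'$ to $\hat{T}$, while the $f^*$-term decreases by at least $\alpha_{f^*} \ge \alpha_{\min}$. Hence
\[
A(T') \ge A(\hat{T}) + \alpha_{\min}.
\]
Finally, since $\hat{T}$ is one of the SPR neighbors over which $\Tmin$ is defined as the minimizer, we have $A(\Tmin) \le A(\hat{T})$, and chaining gives $A(T') \ge A(\Tmin) + \alpha_{\min}$, as desired. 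There is essentially no real obstacle here once Lemma~\ref{lem:T'} is in hand; the only subtlety worth highlighting is that the tree produced by Lemma~\ref{lem:T'} is guaranteed to be obtained by regrafting the \emph{same} subtree $S$, which is precisely what allows the comparison to $\Tmin$ in the last step.
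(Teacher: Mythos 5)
Your proposal is correct, and it rests on the same pivot as the paper: Lemma~\ref{lem:T'} applied with $\tilde{T}=T'$ and a witness $f^*\in \mathrm{BAD}_{T^*}(T')\setminus\mathrm{BAD}_{T^*}(T)$ (which exists because $|\mathrm{BAD}_{T^*}(T')|>|\mathrm{BAD}_{T^*}(T)|$ forces $\mathrm{BAD}_{T^*}(T')\not\subseteq\mathrm{BAD}_{T^*}(T)$). Where you differ from the printed text is instructive: the paper's proof of Lemma~\ref{lem:gap} routes through Part~1 of Lemma~\ref{lem:increase} to obtain the containment $\mathrm{BAD}_{T^*}(T_{\min})\subseteq\mathrm{BAD}_{T^*}(T)\cap\mathrm{BAD}_{T^*}(T')$ and then stops, without deriving the quantitative gap; note that this containment alone cannot yield $A(T')\geq A(T_{\min})+\alpha_{\min}$, since $T_{\min}$ minimizes only the weighted sum, and individual cut sizes in $T_{\min}$ could in principle exceed the corresponding ones in $T'$ at good edges. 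Your direct route supplies exactly the missing step: Lemma~\ref{lem:T'} produces $\hat{T}\in N_S(T)$ with term-wise domination $\mathrm{cut}_R(\hat{T})\leq \mathrm{cut}_R(T')$ for every partition $R$ realized by a single edge of $T^*$, strict at $R(T^*,f^*)$, so by integrality of cut sizes $A(T')\geq A(\hat{T})+\alpha_{f^*}\geq A(\hat{T})+\alpha_{\min}$, while $A(T_{\min})\leq A(\hat{T})$ because $\hat{T}$ is obtained by regrafting the \emph{same} subtree $S$ and hence lies in the set over which $T_{\min}$ minimizes. You were right to flag that "same subtree" point as the crux of the final comparison (and, as the paper observes, $N_S(T)=N_S(T')$, so $T_{\min}$ is simultaneously the minimizer relative to both $T$ and $T'$). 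In short: correct, same underlying mechanism via Lemma~\ref{lem:T'}, and in fact more complete than the argument as printed in the paper.
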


\begin{proof}
Let $S$ be the subtree used to move between $T$ and $T'$.
Let $N_S(T)$ denote those trees obtainable from $T$ by
a prune-and-regraft of $S$.   Note $N_S(T) = N_S(T')$.

Consider $\Tmin$ which minimizes $A(\Tmin)$ amongst
 the SPR neighbors of $T$ which differ by a prune-and-regraft of $S$.
Since $N_S(T')=N_S(T')$ then $\Tmin$ is also the neighbor of $T'$
that minimizes $A(\Tmin)$.
Fix $e\in \BAD_{T^*}(T')\setminus\BAD_{T^*}(T)$.
By Part 1 of Lemma \ref{lem:increase},
\[  \BAD_{T^*}(\Tmin) \subseteq \BAD_{T^*}(T)\cap\BAD_{T^*}(T').
\]

\end{proof}

\subsection{Proof of Rapid Mixing: Theorem \ref{thm:main}}
\label{sec:rapid-mixing}

The proof of our main theorem now follows from a straightforward argument.
We show that the Heat Bath SPR Markov Chain behaves like a local
search algorithm, and then a simple coupling argument
gives the mixing result.

\begin{proof}[Proof of Theorem \ref{thm:main}]
%{\par {\it Proof of Theorem \ref{thm:main}. \ignorespaces}
Let $\mathcal{T}$ denote the space of phylogenetic trees on $n$ taxa.
For a tree $T\in\mathcal{T}$, and subtree $S$ of $T$,
let  $N_S(T)$ denote those trees obtainable from $T$
by pruning-and-regrafting $S$.

Let $C$ be the constant in the $O(\cdot)$ notation of \eqref{eee2}
for the chosen $\alpha_{\min}$ and $\alpha_{\max}~=~1$.
By choosing $\eps_0$ (note that $\eps_0$ is an upper bound on $\eps$)
sufficiently small then for every tree~$T$, in \eqref{eee2},
the $C\eps\ln\ln(1/\eps)$ is smaller than
$|\alpha_{\min}(\eps\ln{\eps})/10|$ and therefore:
\begin{equation}\label{L:first-order}
{\mathcal L}_{T}(\mu^*) =
{\cal E}(\pi^*) + (A(T)+\delta_T)\eps\ln\eps,
\end{equation}
for some $|\delta_T|<\alpha_{\min}/10$.

Fix a tree $T\neq T^*$ and a subtree $S$ of $T$.
By Lemma \ref{lem:gap} and \eqref{L:first-order},
for every $T'\in N_S(T)$
where $|\BAD_{T^*}(T')|>|\BAD_{T^*}(T)|$ we have:
\[
{\mathcal L}_{T'}(\mu^*)  <
{\mathcal L}_{\Tmin}(\mu^*) - (9/10)\alpha_{\min}\eps\ln (1/\eps).
\]

For a character $\sigma\in\Omega^n$, let
$D(\sigma) = |\{i: D_i = \sigma\}|$.
A straightforward application of Hoeffding's inequality
\cite{Hoeffding}
and a union bound
over $\sigma\in\Omega^n$ implies,
for all $\delta>0$:
\[
\Prob{ \mbox{for all }\sigma\in\Omega^n, |D(\sigma) - \mu^*(\sigma)N|
\le \delta N}  \geq 1 -2\cdot 4^n\exp(-2\delta^2N).
\]
Let $q_{\min} = \min_{i,j\in\Omega: i\neq j} Q_{i,j}$ denote a lower-bound
on the off-diagonal entries in the rate matrix.
For $\eps_0$ sufficiently
small, every labeling of the leaves has probability at least $\eps^{2n}$; this follows
from the fact that for every edge, every transition has probability
$\Omega(\eps)$, see \eqref{P:ij} for a precise statement.
Hence, by
choosing $\eps_0$ sufficiently small (relative to
$\alpha_{\min}, q_{\min}$ and the constant in the error term of \eqref{P:ij}),
then for all $\sigma\in\Omega^n$,
$\mu^*(\sigma)\geq \eps^{2n}$.
Let
\[ \delta = \alpha_{\min}\eps\ln(1/\eps) /(20\cdot 4^nn\ln{\eps}).\]
Then, for $\mathbf{D}\sim\mu^*$, \[
{\mathcal L}_{T'}(\mathbf{D})  <
{\mathcal L}_{\Tmin}(\mathbf{D}) - (7/10)\alpha_{\min}\eps\ln(1/\eps) N,
\]
with probability $\geq 1-\exp(-\sqrt{N})$ for $N$ sufficiently large.
The probability of moving from $T$ to $T'$ after choosing $S$ in step 1
is at most:
\begin{equation}
\label{eq:lik-exp}
\frac{\exp({\mathcal L}_{T'}(\mathbf{D}))}
{\exp({\mathcal L}_{\Tmin}(\mathbf{D}))} < \exp(-(7/10)\alpha_{\min}\eps \ln(1/\eps)  N)  < \exp(-10n)
\end{equation}
for $N$ sufficiently large. Therefore, with probability $\geq
1-4n\exp(-10n)$, the chain will move from $T$ to some $\Tmin$
(where $\Tmin$ is a tree that can be obtained from $T$ by an SPR move and such that it minimizes
$A(T_{\min})$),
and thus by part 1 of Lemma \ref{lem:increase}
the number of bad edges will not increase.  Moreover, if we choose the subtree
$S$ satisfying part 2 of Lemma \ref{lem:increase} then the
number of bad edges will
decrease.  Hence, with probability  $\geq 1/(4n) -4n\exp(-10n)\geq 1/(5n)$ the number of bad
edges decreases by at least one.  In expectation, after $\leq 5n$ steps of the chain
the number of bad edges will be zero, in which case we have reached $T^*$.
By Markov's inequality, with probability $\geq 9/10$
after $50n$ steps we reach $T^*$.
Once we reach $T^*$ the probability of moving to a different tree within $50n$ steps is at most
$50n(4n)^2\exp(-10n)<1/100$.  Hence the claimed mixing time follows by an elementary
coupling argument
(c.f., \cite{Peres:book}
for an introduction to the coupling technique)
since from any pair of initial trees, both chains
(run independently) reach $T^*$ at time $50n$ with probability  $\geq 1-1/2\eee$.
\end{proof}
%{\vbox{\hrule height0.6pt\hbox{%
 %  \vrule height1.3ex width0.6pt\hskip0.8ex
  % \vrule width0.6pt}\hrule height0.6pt
 % }

\section{Bayesian Inference}
\label{sec:posterior}

The goal is often to randomly sample from the posterior distribution over
trees.  To do this, we consider a Markov chain whose stationary
distribution is the posterior distribution and analyze the chain's
mixing time, which is a measure of the convergence time of the
chain to its stationary distribution. Let $\Phi(T,Q,\vt)$ denote
a prior density where
\[
\sum_{T}\int_{Q\in\Q}\int_{\vt} \Phi(T,Q,\vt)d\vt dQ = 1.
\]
Our results extend to priors that are lower bounded by some
$\delta>0$ as in Mossel and Vigoda
\cite{MV2}.
In particular, for
all trees $T$ and all branch lengths~$\vt$ where $\vt_e\leq
t_0$ for all edges $e$, we require $\Phi(T,Q^*,\vt)\geq
\delta$. We refer to these priors as $(\delta,t_0)$-regular
priors.

Applying Bayes law we get the posterior distribution:
\[
\ProbCond{T,Q,\vt}{\Dvec}
 =
\frac{ \mu_{T,Q,\vt}(\Dvec)\Phi(T,Q,\vt) } {\Prob{\Dvec} }, \]
where
\[
\Prob{\Dvec}=
 \sum_{T'} \int_{Q'\in\Q}\int_{\vt'}
  \mu_{T',Q',\vt'}(\Dvec)\Phi(T',Q',\vt')d\vt 'dQ.
  \]

Each tree $T$ then has a posterior weight
\begin{equation}
\label{eq:posterior-dist}
w(T) = \int_{Q\in\Q}\int_{\vt}
   \mu_{T,Q,\vt}(\Dvec)\Phi(T,Q,\vt)d\vt dQ.
\end{equation}
Finally, the posterior distribution $\mu$ on trees is defined as
$\mu(T) = w(T)/\sum_{T'} w(T')$.

\subsection{Extension of Theorem \ref{thm:main} to Sampling the Posterior}

To sample from the posterior distribution,
the Markov chain is defined as in Section
\ref{sec:SPR}
except that
in step 2 the weight $w(e^*)$ is now set as $w(T^*)$ defined in
 \eqref{eq:posterior-dist}.  This ensures that the Markov chain is
 reversible with respect to the posterior distribution, and hence this is
 the unique stationary distribution.

 Theorem \ref{thm:main} then extends to hold for any priors which
 are $(\delta,2\eps_0)$-regular
 The proof easily extends to this case in the following manner.

 In particular, we need
to modify the statement of Lemma \ref{lem:opt-weights}
so that, for any tree $T$,
\eqref{eee2} is achieved for $Q=Q^*$ and
for every set of branch lengths $\vt$ where $\vt_e\in (\eps/2,2\eps)$
for all edges $e$.
Then we can use the same proof as Lemma 21 in
Mossel and Vigoda
\cite{MV2}
to get an analog of \eqref{eq:lik-exp} to hold for the posterior weights defined
in \eqref{eq:posterior-dist} in place of the maximum likelihood function
$\exp(\mathcal{L}(\vec{D}))$,
and the remainder of the proof of Theorem \ref{thm:main} remains the same.

\section{Discussion}
\label{remarks}

{\bf NNI Transitions:}  In a NNI transition, an internal edge $e$ is chosen.
Since internal vertices have degree three, there are four subtrees hanging off of~$e$.
There are three possible ways of attaching these four subtrees to $e$, and an NNI transition
moves to one of these rearrangements.
There are trees $T$ (different from the generating tree $T^*$)
where no NNI neighbor (strictly) improves $A(T)$; moreover, there are cases
where there is also no improvement in the next term of \eqref{eee2}.
We are uncertain as to whether Theorem \ref{thm:main} holds for a Markov chain
based on NNI transitions.  It would be especially intriguing if there are cases where
chains based on NNI transitions are slow to converge (so-called torpidly mixing),
whereas a chain based on SPR transitions is provably fast to converge (rapidly mixing).

{\bf Possible Future Work:}
There are now several works with
proofs of convergence of MCMC algorithms for phylogenetic reconstruction in
certain settings --
rapid mixing results in this paper and torpid mixing results in
Mossel and Vigoda \cite{MV1,MV2}
and
\v{S}tefankovi\v{c} and Vigoda
\cite{SV1,SV2}).
All of these results require
that the branch lengths are sufficiently small so that
only the dominant terms of the likelihood function need to be considered.
A natural avenue for extending this paper, is to allow arbitrary branch lengths
on the terminal edges.

{\bf Rapid or Torpid Mixing for General Pure Distributions:}
The most tantalizing question to the authors is whether there exists a pure distribution
(i.\,e., a single generating tree as in the setting of this paper) where Markov chains
based on all of the natural transitions (e.\,g., NNI, SPR and TBR
transitions)
are slow to converge to the stationary distribution (in other words, they are torpidly mixing).
We expect simulations can be quite useful for finding such a bad example, if one
exists; in fact, our previous work
\cite{SV1,SV2}
on this topic was inspired by some
intriguing findings from some simple simulations.

\section{Acknowledgements}

We are thankful to the anonymous referees for many useful
comments on the manuscript.


\begin{thebibliography}{99}





\bibitem{Aldous}  D. Aldous.
Random walks on finite groups and rapidly mixing {M}arkov chains.
{\em Lecture
 Notes Math}, 986:243-297, 1983.

\bibitem{BruenBryant}
T. C. Bruen and D. Bryant.
Parsimony via Consensus.
{\em Systematic Biology}, 57(2):251-256, 2008.



\bibitem{BKHR} R. Beiko, J. Keith, T. Harlow, and M. Ragan.
Searching for convergence in phylogenetic Markov Chain Monte Carlo.
{\em Systematic Biology}, 55(4):553--565, 2006.

\bibitem{KLW}
B. A. Berg.
Introduction to Markov Chain Monte Carlo simulations and their
statistical analysis.  In
{\em Markov Chain Monte Carlo: Innovations and Applications},
eds., W. S. Kendall, F. Liang, and J.-S. Wang, World Scientific
Publishing Co., Singapore. Pages 1-52, 2005.


\bibitem{Diaconis} P. Diaconis and S. P. Holmes.
Random walks on trees and matchings.
{\em Electronic Journal of Probability}, 7, article 6, 2002.

\bibitem{F:book} J. Felsenstein.
{\em Inferring Phylogenies}.
Sinauer Associates, Inc., Sunderland, MA, 2004.

\bibitem{F:parsimony}
J. Felsenstein.
A likelihood approach to character weighting and what it tells us about parsimony and compatibility.
{\em Biological Journal of the Linnean Society}, 16:183-196, 1981.

\bibitem{MCMCMC}
C. J. Geyer and E. A. Thompson.
Annealing Markov Chain Monte Carlo with applications to ancestral inference.
{\em Journal of American Statistical Association}, 90(431):909--920, 1995.

\bibitem{Hoeffding} W. Hoeffding.
Probability inequalities for sums of bounded random variables.
{\em Journal of the American Statistical Association}, 58(301):13--30, 1963.


\bibitem{mrbayes}
J. P. Huelsenbeck and
F. Ronquist.
MRBAYES: Bayesian inference of phylogenetic trees.
{\em Bioinformatics}, 17(8):754-755, 2001.






\bibitem{LMHLR}
C. Lakner,  P. van der Mark,  J. P. Huelsenbeck,  B. Larget, and F. Ronquist.
Efficiency of Markov Chain Monte Carlo tree proposals in Bayesian phylogenetics.
{\em Systematic Biology}, 57(1):86--103, 2008.

\bibitem{Peres:book}
D. A. Levin,  Y. Peres and E. L. Wilmer.
{\em Markov Chains and Mixing Times}.
American Mathematical Society (AMS), Providence, RI, 2008.

\bibitem{MV1}
E. Mossel and E. Vigoda.
Phylogenetic Markov Chain Monte Carlo algorithms are misleading on mixtures of trees.
{\em Science}, 309(5744):2207--2209, 2005.


\bibitem{MV2}
E. Mossel and E. Vigoda.
Limitations of Markov Chain Monte Carlo algorithms for Bayesian inference of phylogeny.
{\em Annals of Applied Probability}, 16(4):2215--2234, 2006.

\bibitem{SS}
C. Semple and M. Steel.  {\em Phylogenetics}.  Oxford
Lecture Series in Mathematics and its Applications, vol. 24, 2003.

\bibitem{SteelSzekely1}
M. A. Steel and L. A. Sz\'{e}kely.
Inverting random functions. {\em Annals of Combinatorics}, 3(1):103-113, 1999.

\bibitem{SteelSzekely2}
M. A. Steel and L. A. Sz\'{e}kely. Inverting random functions II: explicit bounds for the discrete maximum likelihood estimation, with applications. {\em SIAM J. Discrete Math} 15(4):562-575, 2002.

\bibitem{SteelSzekely3}
M. A. Steel and L. A. Sz\'{e}kely. Inverting random functions III: Discrete MLE revisited.
{\em Annals of Combinatorics}, 13(3):365-382, 2009.


\bibitem{SV1}
D. \v{S}tefankovi\v{c}, and E. Vigoda.
Pitfalls of heterogeneous processes for phylogenetic reconstruction.
{\em Systematic Biology}, 56(1):113--124, 2007.

\bibitem{SV2}
D. \v{S}tefankovi\v{c}, and E. Vigoda.
Phylogeny of mixture models:
Robustness of maximum likelihood and non-identifiable distributions.
{\em Journal of Computational Biology}, 14(2):144--155, 2007.

\bibitem{TuffleySteel}
C. Tuffley and M. Steel.
Links between Maximum Likelihood and Maximum Parsimony under a Simple Model of Site Substitution.
{\em  Bull. Math. Biology}, 59(3):581-607, 1997.

\bibitem{Yang}
S. Yang.
  {\em Computational Molecular Evolution}.
Oxford University Press, New York, 2007.

\end{thebibliography}
\end{document}